\documentclass[12pt]{amsart}

\usepackage{latexsym}
\usepackage{amsmath}
\usepackage{epsfig}
\usepackage{epic}
\usepackage{amssymb}

  \usepackage[text={6.6in,9in},centering]{geometry}

\sloppy

\setlength{\unitlength}{1mm}

\parskip=1em

\newtheorem{theorem}{Theorem}[section]
\newtheorem{lemma}[theorem]{Lemma}
\newtheorem{Claim}[theorem]{Claim}

\newtheorem{proposition}[theorem]{Proposition}

\newcommand{\cQ}{{\mathcal Q}}
\newcommand{\cF}{{\mathcal F}}

\newcommand{\TA}[4]{{#1}\ominus {#2}|{#3} \ominus {#4}}
\newcommand{\TB}[4]{{#1}\oplus{#2}|{#3}\ominus {#4}}
\newcommand{\TBB}[4]{{#1}\ominus{#2}|{#3}\oplus{#4}}
\newcommand{\TC}[4]{{#1}\oplus{#2}|{#3}\oplus{#4}}
\newcommand{\TD}[4]{{#1}\oplus{#2}\oplus{#3}\oplus{#4}}

\newcommand{\oalpha}{\ast}
\newcommand{\obeta}{\diamond}
\newcommand{\ogamma}{\circ}

\newcommand{\closure}{{\rm cl}_2}
\newcommand{\med}{\mathop{med}}

\begin{document}

\title{Quarnet inference rules for level-1 networks}


\author{Katharina T.\ Huber}
\address{School of Computing Sciences, University of East Anglia, Norwich, U.K.}
\email{k.huber@uea.ac.uk}

\author{Vincent Moulton}
\address{School of Computing Sciences, University of East Anglia, Norwich, U.K.}
\email{v.moulton@uea.ac.uk}

\author{Charles Semple}
\address{School of Mathematics and Statistics, University of Canterbury, Christchurch, New Zealand}
\email{charles.semple@canterbury.ac.nz}

\author{Taoyang Wu}
\address{School of Computing Sciences, University of East Anglia, Norwich, U.K.}
\email{taoyang.wu@uea.ac.uk}

\keywords{Inference rules, phylogenetic network, quartet trees, closure, cyclic orderings, level-1 network}


\date{\today}

\begin{abstract}
An important problem in phylogenetics is the construction of phylogenetic trees. 
One way to approach this problem, known as the supertree method, 
involves inferring a phylogenetic tree with leaves consisting of a set $X$ of species from  
a collection of trees, each having leaf-set some subset of $X$. In 
the 1980's characterizations, 
certain inference rules were given for when a collection 
of 4-leaved trees, one for each 4-element subset of $X$, can 
all be simultaneously displayed by a single supertree with leaf-set $X$.  Recently, 
it has become of interest to extend such results to phylogenetic networks. These
are a generalization of phylogenetic trees which can be used to represent 
reticulate evolution (where species can come together to form a new species).
It has been shown that a certain type of phylogenetic network, called a
level-1 network, can essentially be constructed from 4-leaved trees. However, the problem of
providing appropriate inference rules for such networks remains unresolved. Here we show that 
by considering 4-leaved networks, called quarnets, as opposed to 4-leaved
trees, it is possible to provide such rules. In particular, we show that  
these rules can be used to characterize when a collection 
of quarnets, one for each 4-element subset of $X$, can 
all be simultaneously displayed by a level-1 network with leaf-set $X$. 
The rules are an intriguing mixture of tree inference rules, and
an inference rule for building up a cyclic ordering of $X$ from orderings on subsets of $X$ of 
size 4. This opens up several new directions of research for inferring phylogenetic networks from 
smaller ones, which could yield new algorithms for solving the supernetwork problem in phylogenetics.   
\end{abstract}

\maketitle

\keywords{Inference rules, phylogenetic network, quartet trees, closure, cyclic orderings, level-1 network}

\section{Introduction}

One of the main goals in phylogenetics is to develop methods for 
constructing evolutionary trees, the tree-of-life being a prime example of
such a tree \cite{Letunic}.
Mathematically speaking, for a set $X$ of species, a phylogenetic $X$-tree
is a (graph theoretical) tree 
with leaf set $X$ and no degree-$2$ vertices; it is {\em binary} if every
internal vertex has degree three.
A popular approach to constructing such trees, called the  {\em supertree method},  
is to build them up from smaller trees \cite{emonds}. The smallest possible trees that 
can be used in this approach are {\em quartet trees}, that is, binary
phylogenetic trees having 4 leaves
(see e.g. Figure~\ref{illustrate} for the quartet tree $ab|cd$ with
leaf-set $\{a,b,c,d\} \subseteq X$).
Thus it is natural to ask the following question: How should we decide whether or not 
it possible to simultaneously  display all of the quartet trees in a given collection 
$\mathcal Q$ of quartet trees by some phylogenetic tree?

\begin{figure}[ht]
	\centering
	\includegraphics[scale=0.7]{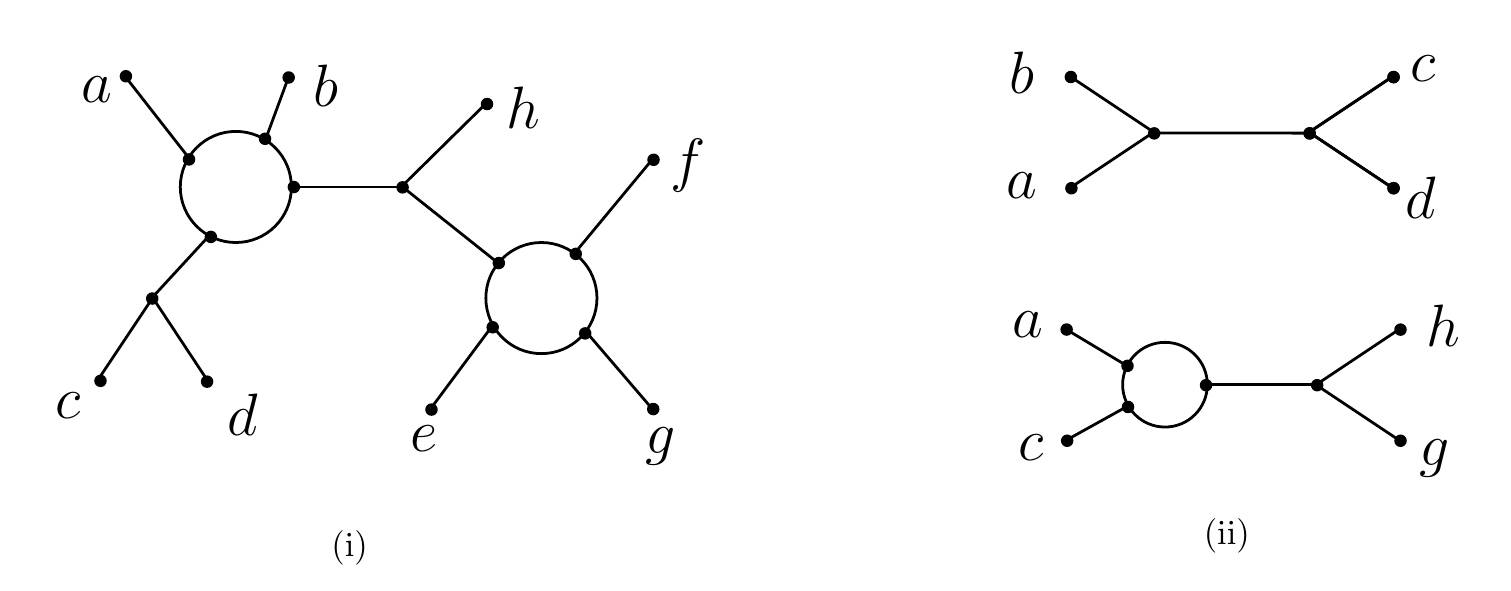}
	\caption{(i) A level-1 phylogenetic network with leaf-set $X=\{a,b,\dots,h\}$. (ii) Top: a quartet
	tree with leaf-set $\{a,b,c,d\}$, also denoted by $ab|cd$. 
	Bottom: a quarnet with leaf-set $\{a,c,h,g\}$. Both the quartet tree and quarnet 
	are displayed by the level-1 network in (i).}
	\label{illustrate}
\end{figure}
	
In case the collection $\mathcal Q$ consists of a quartet tree for every 
possible subset of $X$ of size 4 (which we denote by $X \choose 4$), 
this problem has an elegant solution that was originally 
presented by Colonius and Schulze in 1981 \cite{C81}
(see also \cite{bandelt1986reconstructing} for related results).  
We present full-details in Theorem~\ref{thm:quartet:tree} below, but essentially their result states that, 
given a collection of quartet trees $\mathcal Q$, one for each element in $X \choose 4$, there 
exists (a necessarily unique) binary phylogenetic $X$-tree displaying every quartet tree in the collection
if and only if when the quartet trees $ab|cx$ and $ab|xd$ are contained in $\mathcal Q$ 
then so is the quartet tree $ab|cd$. Rules such as $ab|cx$ plus $ab|xd$ implies 
$ab|cd$ are known as {\em inference rules}, and
they have been extensively studied in the phylogenetics 
literature (see e.g. \cite[Chapter 6.7]{SS03}).
	
Although phylogenetic trees are extremely useful for representing 
evolutionary histories, in certain circumstances they can be inadequate.
For example, when two viruses recombine to form a new virus (e.g. swine flu), 
this is not best represented by a tree as it involves species combining together
to form a new one rather than splitting apart. In such cases, phylogenetic networks 
provide a more accurate alternative to trees and there has been much recent work on such 
structures (see e.g. \cite[Chapter 10]{S16} for a recent review). 

In this paper, we will consider properties of a particular type of phylogenetic 
network called a {\em level-1 network} \cite{G12}. For a set $X$ of species, this is
a connected graph with leaf-set $X$ and such that 
every maximal subgraph with no cut-edge is either a vertex or a cycle
(see Section~\ref{sec:prelim} for more details). Our main results will apply to 
binary level-1 networks, where we also assume
that every vertex has degree 1 or 3. We present an example 
of such a network in Figure~\ref{illustrate}.
Note that a phylogenetic $X$-tree is a special example of a level-1 network 
with leaf-set $X$.  As with phylogenetic $X$-trees it is possible to 
construct level-1 networks from quartets \cite{G12}. However, it has been
pointed out that there are problems with understanding
such networks in terms of inference rules (see e.g. \cite[p.2540]{kp14}).

Here, we circumvent these problems by considering a certain 
type of subnetwork of level-1 network called a {\em quarnet} instead
of using quartet trees. 
A quarnet is a 4-leaved, binary, level-1 network (see e.g. Figure~\ref{illustrate}); they are 
displayed by binary level-1 networks in a similar way to quartets (see Section~\ref{sec:qnet} for details). 
As we shall see, quarnets naturally lead to inference rules for level-1 networks which can 
be thought of as a combination of quartet inference and inference rules for
building circular orderings of a set. Moreover, in our main result we show that, just as 
with phylogenetic trees, the quarnet inference rules that we introduce
can be used to characterize when a collection of quarnets, one for
each element in ${X \choose 4}$, can be displayed by a binary level-1 network
with leaf-set $X$. 

We now summarize the contents of the rest of the paper.
In the next section we present some preliminaries
concerning phylogenetic trees and level-1 networks, as well
as their relationship with quartets. Then, in Section~\ref{sec:qnet}, 
we prove an analogous theorem to the quartet results of
Colonius and Schulze for level-1 networks (Theorem~\ref{thm:quarnet}).
In Section~\ref{sec:characterize}, we use Theorem~\ref{thm:quarnet} to provide 
a characterization for when a set of quartets, one for each element 
of $X \choose 4$, can be displayed by a binary level-1 network (Theorem~\ref{thm:quartet}).
In Section~\ref{sec:closure}, we then define the closure
of a set of quarnets. This can be thought of as the collection of 
quarnets that is obtained by applying inference rules to a given collection of quarnets 
until no further quarnets are generated.
We show that this has similar properties
to the so-called {\em semi-dyadic closure} of a set of quartets (see Theorem~\ref{thm:closure}). 
We conclude with a brief discussion of some possible further directions.

\section{Preliminaries}\label{sec:prelim}

In this section, we review some definitions as well as results concerning the connection between phylogenetic trees and quartets. From 
now on, we assume that $X$ is a finite set with $|X|\ge 2$.

\subsection{Definitions}

An {\em unrooted phylogenetic network $N$ (on $X$)} 
(or {\em network $N$ (on $X$)} for short) is a connected graph $(V,E)$ with 
$X \subseteq V$, every vertex has either degree 1 or degree at least 3, 
and the set of degree-$1$ vertices is $X$. 
The elements in $X$ are the {\em leaves} of $N$. We also
denote the leaf-set of $N$ by $L(N)$. The network is called {\em binary} if every vertex 
in $N$ has degree 1 or 3. An {\em interior vertex} of $N$ is a vertex that is not a leaf. 
A {\em cherry} in $N$ is a pair of leaves that are adjacent with 
the same vertex. Two phylogenetic networks $N$ and $N'$ on $X$ are {\em isomorphic}
if there exists a graph-theoretical isomorphism between $N$ and $N'$ whose
restriction to $X$ is the identity map.

Note that a {\em phylogenetic ($X$-) tree} is a network which is also a tree.
For any three vertices $u_1,u_2,u_3$ in such a tree $T$, 
their {\em median}, denoted by $\med(u_1,u_2,u_3)=\med_T(u_1,u_2,u_3)$, 
is the unique vertex in $T$ that is contained in every path between $u_1$, $u_2$ and $u_3$. 

A \emph{cut-vertex} of a network is an vertex
whose removal disconnects the network, and a \emph{cut-edge} 
of a network is an edge whose removal disconnects the network.  
A cut-edge is \emph{trivial} if one of the connected components 
induced by removing the cut-edge is a vertex (which must necessarily be a leaf).
A network is  \emph{simple} if all of the cut-edges are trivial 
(so for instance, note that phylogenetic trees with 
more than three leaves are \emph{not} simple networks).
A network $N$ is {\em level-1} if every maximal subgraph in $N$ that has no cut-edge is either a vertex or a cycle.
Note that we shall say that a network $N$ on $X$, where $|X|\ge 3$, 
is of {\em cycle-type} if it contains a unique cycle of length $|X|$, and the 
number of vertices in $N$ is $2|X|$ (so in particular, a network
is of cycle-type if it is simple, binary, level-1 and is not a phylogenetic tree).

In what follows it will be useful to consider a certain type 
of operation on a level-1 network, which we define as follows.
For a level-1 network $N$ on $X$, let $u$ be an interior vertex of $N$ that 
is not contained in any cycle in $N$. 
Furthermore, let $(v_1, v_2, \cdots,v_k)$, where $k\ge 3$, be a
circular ordering 
of the set of vertices in $N$ that are adjacent to $u$.
Then we obtain a new network $N'$ on $X$ from $N$ 
by removing vertex $u$ and all edges incident with it
and inserting new vertices $u_i$ and
new edges $\{u_i,v_i\}$ and $\{u_i, u_{i+1}\}$ for all $1\leq i \leq k$ (see Fig.~\ref{fig:blowup}). 
Here we use the convention that $k+1$ is identified with 1. 
We say that $N'$ is obtained from $N$ by a {\it blow-up} 
operation on  $u$ (using the given circular ordering of its neighbours). 
Note that $N'$ is a level-$1$ network with one more cycle than  $N$. 
Note that blow-up operations on the 
same vertex but with different circular orderings of 
its neighbours may lead to non-isomorphic networks. We illustrate a
blow-up operation in Fig.~\ref{fig:blowup}.

\begin{figure}[ht]
\centering
\includegraphics[scale=0.8]{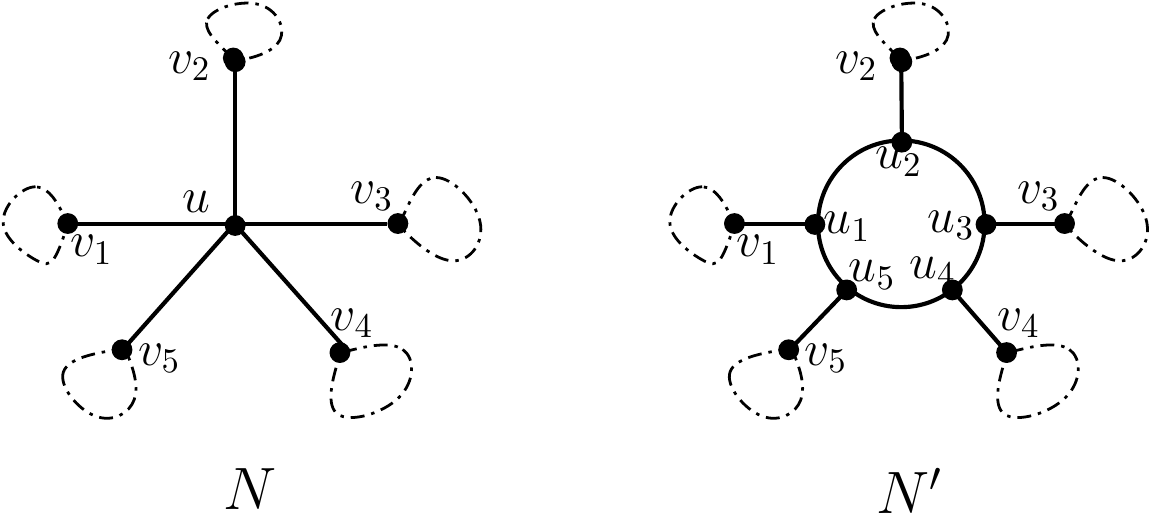}
\caption{Example of blow-up operations: $N'$ is obtained from $N$ by a blow-up operation on $u$.}
\label{fig:blowup}
\end{figure}

\subsection{Quartets, Trees and Networks}

We now briefly recall some notation and results 
concerning quartet systems (for more details see \cite[Chapter 3]{DHKMS12}). 

Although quartets are often considered as being 4-leaved trees, here it is more
convenient to consider a {\em quartet} $Q$ to be a 
partition of a subset $Y$ of $X$ of size 4 into
two subsets of size 2. The set $Y$ is called the {\em support} of $Q$.
If $Q = \{\{a,b\},\{c,d\}\}$ for $a,b,c,d \in X$ distinct, we denote $Q$ by $ab|cd$.  
The set of all quartets on $X$ is denoted by $\cQ(X)$, and 
any non-empty subset $\cQ\subseteq \cQ(X)$ is called a {\em quartet system} (on $X$).
Given a quartet system $\cQ$ on $X$ and a subset $Y\in {X \choose 4 }$, let 
$m(Y)=m_{\cQ}(Y)$ be the number of quartets in $\cQ$ whose support 
is $Y$. For simplicity, we write $m(\{a,b,c,d\})$ as $m(a,b,c,d)$. 
If $m(Y)\geq 1$ for every subset $Y\in {X \choose 4 }$, then $\cQ$ is said to be {\em dense}. 

Following the terminology in~\cite{DHKMS12}, a quartet system $\cQ$ is:
\begin{itemize} 
\item {\em thin} if no pair of quartets in $\cQ$ have the same support;
\item {\em saturated} if for all $\{a,b,c,d,x\}\in {X \choose 5}$ 
with $ab|cd \in \cQ$, the system $\cQ$ contains at least one quartet in $\{ax|cd, ab|cx\}$;
\item {\em transitive} if for all 
$\{a,b,c,d,x\}\in {X \choose 5}$, if $\{ab|cx, ab|xd\}\subseteq \cQ$ holds, then $ab|cd$ 
is also contained in $\cQ$.  
\end{itemize}

These concepts are related as follows:

\begin{lemma}
	\label{lem:transitive}
	Suppose that $\cQ$ is a quartet system on $X$. If $\cQ$ is saturated and thin, then $\cQ$ is transitive.  
\end{lemma}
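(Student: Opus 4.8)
The plan is to argue by contradiction, using the two given quartets as inputs to the saturation condition. Suppose $\cQ$ is saturated and thin, suppose $\{ab|cx, ab|xd\}\subseteq \cQ$ for some $\{a,b,c,d,x\}\in {X \choose 5}$, and suppose toward a contradiction that $ab|cd\notin \cQ$. I would then invoke saturation twice, once for each hypothesised quartet, each time choosing the labelling so that the target quartet $ab|cd$ is one of the two permitted conclusions; the assumption $ab|cd\notin\cQ$ will then force the \emph{other} conclusion in each case.

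First, I apply saturation to $ab|cx\in\cQ$ with $d$ playing the role of the fresh fifth point (matching the definition's template $ab|cd \to \{ax|cd,\,ab|cx\}$ to the quartet $ab|cx$ with new element $d$). This gives $ad|cx\in\cQ$ or $ab|cd\in\cQ$, so the standing assumption $ab|cd\notin\cQ$ forces $ad|cx\in\cQ$. Symmetrically, applying saturation to $ab|xd\in\cQ$, viewed as $ab|dx$, with $c$ as the fresh point gives $ac|dx\in\cQ$ or $ab|cd\in\cQ$, and again $ab|cd\notin\cQ$ forces $ac|dx\in\cQ$.

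Finally, I would observe that $ad|cx$ and $ac|dx$ both have support $\{a,c,d,x\}$, yet they are distinct quartets since $\{a,d\}\neq\{a,c\}$. Two distinct quartets with the same support contradict thinness, and this contradiction establishes $ab|cd\in\cQ$, i.e.\ transitivity. The one step requiring care is the bookkeeping in each saturation call: one must align the (mildly asymmetric) replacement pattern in the definition of \emph{saturated} with the correct labelling of $ab|cx$ and of $ab|xd$ so that $ab|cd$ is genuinely one of the two allowed outputs and the two alternative outputs land on a common support. I expect this labelling match to be the only subtle point; once the four elements are assigned correctly, the clash with thinness is immediate.
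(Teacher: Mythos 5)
Your proof is correct and is essentially identical to the paper's: both apply saturation to $ab|cx$ with fresh point $d$ and to $ab|dx$ with fresh point $c$, then note that if $ab|cd\notin\cQ$ the forced alternatives $ad|cx$ and $ac|dx$ share the support $\{a,c,d,x\}$, contradicting thinness. The labelling bookkeeping you flag is handled exactly as you describe.
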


\begin{proof}
	We use a similar argument to that used in~\cite[Lemma 1]{bandelt1986reconstructing}. 
	Suppose $\{a,b,c,d,x\}\in {X \choose 5}$ with $\{ab|cx, ab|xd\}\subseteq \cQ$. 
	We need to show $ab|cd\in \cQ$.
	
	Since $\cQ$ is saturated and $ab|cx$ is contained in $\cQ$, we have 
	$\{ab|cd, ad|cx\}\cap \cQ\not= \emptyset$. Using a similar argument, $ab|dx$ in $\cQ$ 
	implies that $\{ab|cd, ac|dx\}\cap \cQ\not= \emptyset$. Therefore, 
	we must have  $ab|cd\in \cQ$ as otherwise $\{ad|cx,ac|dx\}\subset \cQ$, a 
	contradiction to the assumption that $\cQ$ is thin.
\end{proof}
 
 A quartet $ab|cd$ on $X$ is {\em displayed} by a phylogenetic $X$-tree $T$  
if the path between $a$ and $b$ in $T$ is vertex disjoint from 
the path between $c$ and $d$ in $T$. The quartet system displayed by $T$ is denoted by $\cQ(T)$.
 
In view of~\cite[Theorem 3.7]{DHKMS12} and the last lemma, we 
have the following slightly stronger characterisation of quartet systems 
displayed by a phylogenetic tree, which was stated 
in~\cite[Proposition 2]{bandelt1986reconstructing} using slightly different terminology.

\begin{theorem}
\label{thm:quartet:tree}
A quartet system $\cQ\subseteq \cQ(X)$ is of the 
form $\cQ=\cQ(T)$ for a (necessarily unique) phylogenetic $X$-tree $T$ 
if and only if $\cQ$ is thin and saturated. 
\end{theorem}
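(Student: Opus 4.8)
The plan is to prove Theorem~\ref{thm:quartet:tree} by invoking the cited characterisation from~\cite[Theorem 3.7]{DHKMS12} and then bridging the gap between the hypotheses stated there and the hypotheses ``thin and saturated'' used here, with Lemma~\ref{lem:transitive} doing the crucial work of converting the saturation condition into transitivity. First I would recall precisely what~\cite[Theorem 3.7]{DHKMS12} asserts: a quartet system is of the form $\cQ(T)$ for a phylogenetic $X$-tree $T$ exactly when $\cQ$ is thin and transitive (this is the standard Colonius--Schulze/dyadic formulation, where transitivity is the rule ``$ab|cx$ and $ab|xd$ imply $ab|cd$''). Uniqueness of $T$ is part of that statement, so I would carry it over directly.

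For the forward direction, suppose $\cQ=\cQ(T)$ for some phylogenetic $X$-tree $T$. I would verify thinness and saturation directly from the tree. Thinness is immediate, since for any four leaves $a,b,c,d$ the tree $T$ displays exactly one of the three possible quartets on $\{a,b,c,d\}$, determined by which pair of paths in $T$ are vertex-disjoint. For saturation, given $\{a,b,c,d,x\}\in{X\choose5}$ with $ab|cd\in\cQ(T)$, I would examine the median structure of $T$ on these five leaves: the path from $a$ to $b$ avoids the path from $c$ to $d$, and inserting $x$ into the picture, I would argue by a case analysis on where the ``$x$-branch'' attaches relative to the two disjoint paths that at least one of $ax|cd$ or $ab|cx$ must be displayed. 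This is a routine median/path-disjointness argument on trees and I do not expect it to present difficulties.

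For the converse, suppose $\cQ$ is thin and saturated. By Lemma~\ref{lem:transitive}, $\cQ$ is then transitive. Hence $\cQ$ is thin and transitive, so~\cite[Theorem 3.7]{DHKMS12} applies and yields a (necessarily unique) phylogenetic $X$-tree $T$ with $\cQ=\cQ(T)$. This is the heart of why Lemma~\ref{lem:transitive} was proved just beforehand: it is exactly the implication needed to reduce the present (slightly stronger, since saturation plus thinness is a more restrictive hypothesis than transitivity plus thinness in general) formulation to the known one.

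The main obstacle, and the only part requiring genuine care, is the saturation half of the forward direction, where I must show that a displayed quartet $ab|cd$ forces one of $ax|cd,ab|cx$ to be displayed for every fifth leaf $x$. The cleanest way to handle this is to restrict $T$ to the five-leaf set $\{a,b,c,d,x\}$, obtaining a phylogenetic tree on five leaves in which $ab|cd$ is displayed, and then observe that in any such five-leaf tree the leaf $x$ hangs off one of the edges of the underlying $ab|cd$ topology; checking each possible attachment edge shows that the resulting restricted tree always displays $ax|cd$ or $ab|cx$ (or both). I would note that restriction commutes with displaying quartets, so a quartet displayed by the restricted tree is displayed by $T$ itself, completing the argument.
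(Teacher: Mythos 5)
Your converse direction rests on a misquotation of the cited result, and the statement you substitute for it is false. You recall \cite[Theorem 3.7]{DHKMS12} as saying that $\cQ=\cQ(T)$ holds exactly when $\cQ$ is thin and transitive, and your final step is ``$\cQ$ is thin and transitive, so the theorem applies.'' But thinness plus transitivity does not characterize quartet systems of trees: take $X=\{a,b,c,d,e\}$ and $\cQ=\{ab|cd\}$. This $\cQ$ is thin and vacuously transitive (it never contains two quartets of the form $ab|cx$, $ab|xd$), yet it is not $\cQ(T)$ for any phylogenetic $X$-tree $T$, since --- by the very saturation property you establish in your forward direction --- any tree displaying $ab|cd$ must also display $ae|cd$ or $ab|ce$. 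Saturation is exactly the condition that rules such examples out, and it cannot be discarded from the hypotheses of the cited theorem. The correct statement of \cite[Theorem 3.7]{DHKMS12} (equivalently, of \cite[Proposition 2]{bandelt1986reconstructing}) is that $\cQ=\cQ(T)$ if and only if $\cQ$ is thin, \emph{saturated} and transitive; the entire point of Lemma~\ref{lem:transitive}, and of the word ``stronger'' in the paper's phrasing, is that the third condition is redundant given the other two and can therefore be dropped. Your parenthetical explanation of why the present formulation is ``slightly stronger'' reflects the same confusion about which condition is being removed.

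The repair is immediate and returns you to the paper's (essentially one-line) argument: having assumed $\cQ$ thin and saturated, Lemma~\ref{lem:transitive} gives transitivity, so all three hypotheses of the correctly quoted theorem hold and the (unique) tree $T$ exists. With that correction your proposal matches the paper's proof, except that you additionally verify the forward direction (that $\cQ(T)$ is thin and saturated) from first principles, which is already part of the cited theorem; that verification is sound, apart from the claim that $T$ displays ``exactly one'' quartet per $4$-subset --- for non-binary trees it displays at most one (none, when the four leaves meet at a single vertex of degree at least four), which is all that thinness requires.
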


We now turn our attention to the relationship between quartets and level-1 networks.

A split $A|B$ of $X$ is a bipartition of $X$ into two non-empty parts $A$ and $B$ (note that 
since $A|B$ is a bipartition, order does not matter, that is, $A|B=B|A$).
Such a split is induced by a network $N$ if there 
exists a  cut-edge in $N$ whose removal results in two connected components,
one with leaf-set $A$ and the other with leaf-set $B$.  
A quartet $ab|cd$ is {\em exhibited} by a network $N$ if 
there exists a split $A|B$ induced by $N$ such 
that $\{a,b\}\subseteq A$ and $\{c,d\}\subseteq B$.  

Note that if a quartet $ab|cd \in \cQ(X)$ is exhibited by $N$, then it is {\em displayed} 
by $N$, that is, $N$ contains two disjoint paths, one from $a$ to $b$, and 
the other from $c$ to $d$. However, the converse is not true. 
For example, quartet $ab|cd$ is displayed by 
the network in Fig.~\ref{fig:qnet}(iv), but $ab|cd$ is not exhibited by this network.
Given a network $N$, we let $\Sigma(N)$ denote the set of quartets 
exhibited by $N$, and let $\cQ(N)$ be the set of quartets displayed by $N$. 
In light of the last remark, clearly we have $\Sigma(N)\subseteq \cQ(N)$.

\section{Quarnets} \label{sec:qnet}

In this section, we shall show that an analogue of Theorem~\ref{thm:quartet:tree} holds
for quarnets and level-$1$ networks. We begin by formally defining the concept of a quarnet and how
quarnets can be obtained from level-1 networks.

Given a binary, level-1 phylogenetic network $N$ on $X$ 
and a subset $A\subseteq X$, we let $N|_A$ denote the {\em network induced on $A$ by $N$}, 
which is obtained from $N$ by deleting all edges that are not contained in some 
path between a pair of elements in $A$, removing all 
isolated vertices, and then repeatedly applying the following two
operations until neither of them is applicable (i) suppressing degree-$2$ vertices, 
and (ii) suppressing parallel edges. 
Note that $N|_A$ is a binary, level-1 phylogenetic network on $A$. 

\begin{figure}[ht]
\centering
\includegraphics[scale=1]{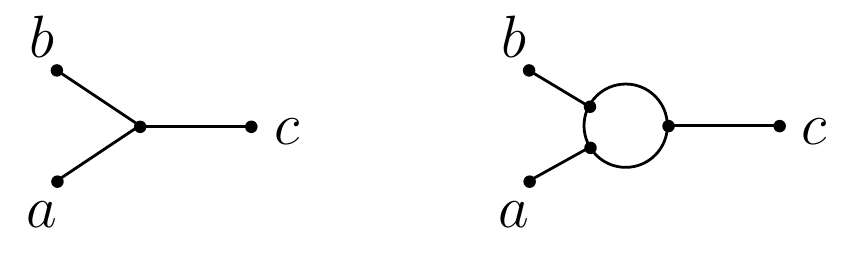}
\caption{The two types of trinets: tree type (left) and  cycle type (right).  }
\label{fig:trinet}
\end{figure}

A {\em trinet} is a binary, level-1 phylogenetic 
network on three leaves. Note that there are two types of 
trinets: one is of cycle type; the other does not contain a cycle
and is of {\em tree type} (see Fig.~\ref{fig:trinet}
for an illustration).  Similarly, a {\em quarnet or qnet} 
is a binary, level-1 phylogenetic network with four leaves.
The leaf-set $L(F)$ of a qnet $F$ is called its {\em support}.
As illustrated in Fig.~\ref{fig:qnet}, there are four types of qnets: Type I qnets contain no cycles; Type II qnets contain one cycle and one non-trival cut-edge; Type III qnets contain two cycles; and Type IV qnets contain no non-trivial cut-edge. 
A {\em qnet system} $\cF$ on $X$ is a collection of qnets all of whose 
supports are contained in  $X$.  We shall say that a 
qnet $F$ with support $A \subseteq X$ 
is {\em displayed} by a network $N$ on $X$ if $F$ is isomorphic to $N|_A$.
Moreover, 
we let $\cF(N)$ be the qnet system displayed by $N$, that is, 
$$
\cF(N)=\{N|_A\,~~\mbox{for all $A\subseteq X$ with $|A|=4$}\}.
$$ 

\begin{figure}[ht]
\centering
\includegraphics[scale=0.6]{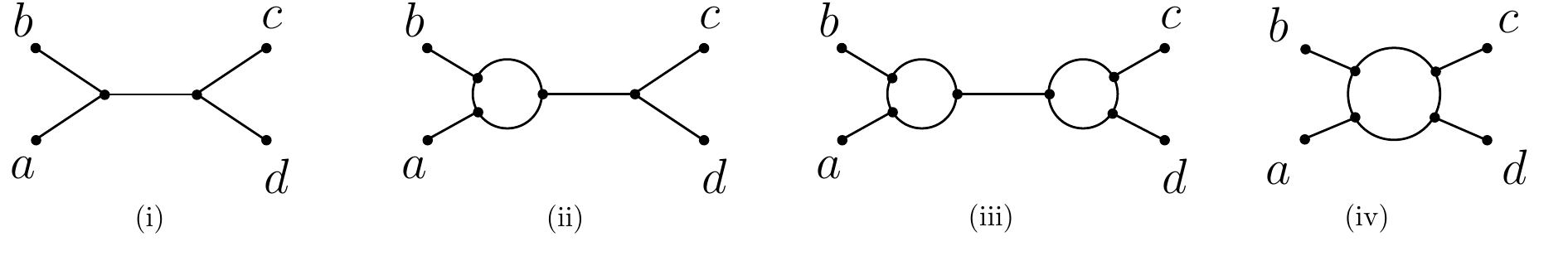}
\caption{Four types of qnets on $X=\{a,b,c,d\}$: (i) a Type I qnet $\TA{a}{b}{c}{d}$; 
(ii) a Type II qnet $\TB{a}{b}{c}{d}$; (ii) a Type III qnet $\TC{a}{b}{c}{d}$; 
(iv) a Type IV qnet $\TD{a}{b}{c}{d}$.
Type IV is of cycle type.
}
\label{fig:qnet}
\end{figure}

We now turn to characterizing when a qnet system is displayed by a
level-1 network. To do this, we introduce some additional concepts 
concerning qnet systems.

First, a qnet system $\cF$ on $X$ is {\em consistent} (on subsets of $X$ of size three) if 
for all subsets $A \in {X \choose 3}$, $F|_A$ is isomorphic to $F'|_A$,
for each pair of qnets in $\cF$
with $A\subseteq L(F)\cap L(F')$.  In addition, a 
qnet system $\cF$ on $X$ is {\em minimally dense} if for all $Y \in {X \choose 4}$, there 
exists precisely one qnet in $\cF$ with support $Y$.

Now, we say that a 
qnet system $\cF$ on $X$ is {\em cyclically-transitive or cyclative} if for all subsets 
$\{a,b,c,d,x\} \in {X \choose 5}$ with 
$\{\TD{a}{b}{c}{d},  \TD{x}{a}{c}{d}\} \subseteq \cF$, the system 
$\cF$ also contains $\TD{a}{b}{d}{x}$. Note that this is closely related
to the cyclic-ordering inference rule given in \cite[Proposition 1]{bandelt1992canonical}.
In addition, we say that a qnet system $\cF$ on $X$ is {\em saturated}, if for all subsets
$\{a,b,c,d,x\} \in {X \choose 5}$, the following hold:
\begin{itemize}
\item[(S1)] If  $\cF$ contains $\TA{a}{b}{c}{d}$, then $\TA{a}{b}{c}{x}$, or $\TBB{a}{b}{c}{x}$, or $\TA{a}{x}{c}{d}$, or $\TB{a}{x}{c}{d}$ is contained in $\cF$.\\

\item[(S2)] If $\cF$ contains $\TB{a}{b}{c}{d}$, then $\TB{a}{b}{c}{x}$, or $\TC{a}{b}{c}{x}$, or $\TA{a}{x}{c}{d}$, or $\TB{a}{x}{c}{d}$ is contained in $\cF$.\\

\item[(S3)] If $\cF$ contains $\TC{a}{b}{c}{d}$, then $\TB{a}{b}{c}{x}$, or $\TC{a}{b}{c}{x}$, or $\TBB{a}{x}{c}{d}$, or $\TC{a}{x}{c}{d}$ is contained in $\cF$.\\
\end{itemize}

We next show how these concepts are related. To prove the following result,
given  a qnet system $\cF$, we shall consider the quartet system 
consisting of those quartets that are exhibited by some qnet in $\cF$,
which we shall denote by $\Sigma(\cF)$.

\begin{lemma}
\label{lem:quartet:qnet}
Suppose that $\cF$ is a qnet system on $X$. 

\hspace{2ex}{\rm (i)} If $\cF$ is minimally dense, then $\Sigma(\cF)$ is thin.

\hspace{2ex}{\rm (ii)} If $\cF$ is saturated, then $\Sigma(\cF)$ is saturated.
\end{lemma}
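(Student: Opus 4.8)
The plan is to reduce both parts to a single structural observation about qnets: \emph{each qnet exhibits at most one quartet, and that quartet is precisely the one recorded by the bar in the notation.} First I would check that a Type IV qnet $\TD{a}{b}{c}{d}$ has no non-trivial cut-edge, and hence exhibits no quartet at all, whereas a qnet of Type I, II or III has a \emph{unique} non-trivial cut-edge, whose induced split separates the two pairs indicated by the bar. Consequently $\TA{a}{b}{c}{d}$, $\TB{a}{b}{c}{d}$, $\TBB{a}{b}{c}{d}$ and $\TC{a}{b}{c}{d}$ each exhibit precisely the quartet $ab|cd$ and nothing else. This is the only place where the graph theory of the four qnet types enters; after it is established, both parts are purely combinatorial.

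For part (i), I would argue by contradiction. Suppose $\Sigma(\cF)$ is not thin, so it contains two distinct quartets $Q_1 \neq Q_2$ with a common support $Y \in {X \choose 4}$. Each $Q_i$ is exhibited by some qnet $F_i \in \cF$; since a qnet has exactly four leaves and the support of an exhibited quartet is contained in the leaf-set of the exhibiting qnet, we get $L(F_i) = Y$ for $i = 1,2$. Minimal density then forces $F_1 = F_2$. But by the structural observation a single qnet exhibits at most one quartet, whence $Q_1 = Q_2$, a contradiction.

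For part (ii), let $\{a,b,c,d,x\} \in {X \choose 5}$ with $ab|cd \in \Sigma(\cF)$; I must produce a quartet of $\Sigma(\cF)$ lying in $\{ax|cd,\, ab|cx\}$. By the structural observation, the qnet $F \in \cF$ exhibiting $ab|cd$ is one of $\TA{a}{b}{c}{d}$, $\TB{a}{b}{c}{d}$, $\TBB{a}{b}{c}{d}$ or $\TC{a}{b}{c}{d}$, giving four cases. For the first I apply (S1), for the second (S2), and for the fourth (S3); in each case the relevant axiom places into $\cF$ a qnet whose bar separates either $\{a,b\}$ from $\{c,x\}$ or $\{a,x\}$ from $\{c,d\}$, so that $ab|cx \in \Sigma(\cF)$ or $ax|cd \in \Sigma(\cF)$, as required. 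The remaining case $\TBB{a}{b}{c}{d}$ I handle by invoking (S2) after the relabelling $(a,b,c,d) \mapsto (c,d,a,b)$, using that $ab|cd = cd|ab$ and that the relabelled conclusions again exhibit $ab|cx$ or $ax|cd$.

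The main obstacle I anticipate is bookkeeping rather than conceptual: I must verify that in each of the four cases \emph{every} listed conclusion of (S1)--(S3) exhibits one of the two target quartets, and never some third quartet, and I must confirm that the relabelling reducing $\TBB{a}{b}{c}{d}$ to (S2) is consistent with the unordered nature of the split $ab|cd$. Once the dictionary between a qnet and its exhibited quartet is pinned down, each individual case reduces to a one-line check.
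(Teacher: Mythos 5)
Your proof is correct and follows essentially the same route as the paper: part (i) is the observation that minimal density plus the fact that a single qnet exhibits at most one quartet forces thinness, and part (ii) is a case analysis on the type of the exhibiting qnet, invoking (S1)--(S3) and checking that every listed conclusion exhibits $ab|cx$ or $ax|cd$. Your explicit treatment of the $\TBB{a}{b}{c}{d}$ case via relabelling to (S2) is a detail the paper subsumes under ``similar arguments,'' so if anything your write-up is slightly more careful.
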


\begin{proof}
For the proof of (i), as $\cF$ is minimally dense, for each subset $Y$ of $X$ with size four, there
exists precisely one qnet $F$ in $\cF$ whose support is $Y$.
Hence there exists at most one quartet in $\Sigma(\cF)$ with support $Y$.

To prove (ii), consider a quartet $Q=ab|cd$ in $\cQ(\cF)$ and 
an arbitrary element $x$ in $X$ that is distinct from $a,b,c,d$.   
Let $F$ be a qnet in $\cF$ such that $Q$ is the quartet exhibited by $F$. 
Then $F$ is Type I, II or III.  
Assume first that $F$ is Type I, then $F=\TA{a}{b}{c}{d}$. Since $\cF$ is saturated, by (S1),  
$$
\{\TA{a}{b}{c}{x}, \TBB{a}{b}{c}{x}, \TA{a}{x}{c}{d}, \TB{a}{x}{c}{d}\} \cap \cQ\not = \emptyset,
$$
and so one of the quartets $ab|cx$ and $ax|cd$
is contained in $\Sigma(\cF)$, as required.
If $F$ is of Type II or III, then
similar arguments using (S2) and (S3), respectively, show that $ab|cx$ or
$ax|cd$ is contained in $\Sigma(\mathcal F)$.
\end{proof}

We now characterize when a minimally dense set of qnets is displayed by a level-1 network.

\begin{theorem}
\label{thm:quarnet}
Let $\cF$ be a minimally dense qnet system on $X$ with $|X|\geq 4$.
Then $\cF=\cF(N)$ for some 
(necessarily unique) binary, level-1 network $N$ on $X$ if and only
if $\cF$ is consistent, cyclative and saturated. 
\end{theorem}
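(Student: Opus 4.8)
The plan is to prove both directions, with the forward (necessity) direction being the more routine one. For necessity, suppose $\cF = \cF(N)$ for a binary level-1 network $N$. I would first verify \emph{consistency}: if $F$ and $F'$ are both of the form $N|_A$ for four-element sets, and $|L(F)\cap L(F')|\ge 3$, then for $A'\in\binom{X}{3}$ contained in both supports, $F|_{A'}$ and $F'|_{A'}$ both equal $N|_{A'}$ (using that inducing is transitive, i.e.\ $(N|_A)|_{A'}=N|_{A'}$), so they agree. Verifying \emph{saturation} and \emph{cyclativity} amounts to a finite case analysis: for each five-element subset $\{a,b,c,d,x\}$ one looks at $N|_{\{a,b,c,d,x\}}$, which is itself a five-leaf binary level-1 network, enumerates the possible shapes, and checks that the induced four-leaf quarnets satisfy rules (S1)--(S3) and the cyclative rule. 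The cyclative rule should follow from the observation that if two of the quarnets on overlapping four-subsets are both of Type IV (cycle type), then the relevant five leaves all lie on a single cycle of $N$, forcing a consistent cyclic ordering from which the third Type IV quarnet is read off.

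The substantive direction is sufficiency. Assume $\cF$ is minimally dense, consistent, cyclative, and saturated; I must reconstruct a level-1 network $N$ with $\cF(N)=\cF$. The natural strategy is to reduce to the tree case via Theorem~\ref{thm:quartet:tree}. By Lemma~\ref{lem:quartet:qnet}, the exhibited quartet system $\Sigma(\cF)$ is thin (from minimal density) and saturated (from saturation of $\cF$); hence by Theorem~\ref{thm:quartet:tree} there is a unique phylogenetic $X$-tree $T$ with $\cQ(T)=\Sigma(\cF)$. This tree $T$ captures the ``tree part'' of the desired network, namely the cut-edges and the way the cycles attach to the rest of the structure. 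The idea is then to recover each cycle of $N$ as a \emph{blow-up} of an appropriate interior vertex of $T$: the Type II, III, and IV quarnets in $\cF$ record which vertices of $T$ must be blown up and, crucially, the cyclic orderings of their neighbours.

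The core of the argument, and the place where cyclativity does its work, is the recovery of the cyclic orderings. For each interior vertex $u$ of $T$ whose surrounding quarnets indicate a cycle, I would assemble a circular ordering of the ``directions'' at $u$ (equivalently of a suitable ground set attached to $u$) out of the local four-element cyclic orderings encoded by the Type IV quarnets $\TD{a}{b}{c}{d}$. This is exactly the setting of the cyclic-ordering inference rule of \cite{bandelt1992canonical}: cyclativity is the hypothesis that lets these size-four orderings be patched into a single global circular ordering, and consistency guarantees the local orderings agree on triples so that the patching is well defined. Once each such $u$ is blown up using its reconstructed ordering, one obtains a candidate $N$, and it remains to verify $\cF(N)=\cF$ by checking agreement on every four-subset and to argue uniqueness (which follows since $T$ is forced by Theorem~\ref{thm:quartet:tree} and each cyclic ordering is forced by the Type IV quarnets up to reflection, the only symmetry of a cycle).

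The main obstacle I anticipate is precisely this ordering-reconstruction step: showing that the local four-point cyclic orderings are mutually compatible and glue into one global cyclic ordering on each cycle. This requires using cyclativity together with consistency in a nontrivial way, handling the interaction between distinct cycles and the tree scaffold, and correctly identifying which interior vertices of $T$ carry a cycle (distinguishing Type II attachments from Type III and Type IV configurations). The tree-theoretic reduction and the verification steps are comparatively mechanical; the genuine content lies in translating the combined combinatorial hypotheses into a coherent family of blow-up operations.
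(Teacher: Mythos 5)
Your proposal follows essentially the same route as the paper's proof: reduce to the tree case via Lemma~\ref{lem:quartet:qnet} and Theorem~\ref{thm:quartet:tree}, identify which interior vertices of $T$ must carry a cycle from the cycle-type quarnets, recover each circular ordering by verifying the quaternary-relation axioms of \cite{bandelt1992canonical} (where cyclativity and minimal density are used), blow up accordingly, and verify $\cF(N)=\cF$ by a case analysis on quarnet type together with minimal density. The paper organizes exactly this into three claims (consistency of cycle-type detection at degree-3 vertices, Type~IV uniformity across the parts at high-degree vertices, and existence of the circular ordering), so the approach matches in all essentials.
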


\begin{proof}
Clearly, if $\cF=\cF(N)$ holds for a binary, level-1 network $N$, then $\cF(N)$ is consistent, cyclative and saturated. 

We now show that the converse holds.
Suppose that $\cF$ is a minimally dense qnet system on $X$ that is consistent, cyclative and saturated. 
Consider the quartet system $\Sigma=\Sigma(\cF)$. By Lemma~\ref{lem:quartet:qnet}, 
 $\Sigma$ is thin and saturated. 
Therefore, by Theorem~\ref{thm:quartet:tree},  there exists 
a unique phylogenetic tree $T$ with $\cQ(T)=\Sigma$. 

For each interior vertex $v$ in $T$, let ${\mathcal A}_v$
denote the partition of $X$ induced by deleting $v$ from $T$
so that, in particular, the number of parts in ${\mathcal A}_v$ is
equal to the degree of $v$. 
Note that, for all $A \in {\mathcal A}_v$, if $a \in A$ and $b\in X-A$,  
the path in $T$ between $a$ and $b$ must contain $v$, and if $a,b \in A$, the path between $a$ and $b$ 
does not contain $v$.
 
We next partition the set of interior vertices of $T$. Let $V_1(T)$ be the set of degree-$3$ vertices $v$ in $T$ with the property that there exist three elements, 
one from each distinct part of ${\mathcal A}_v$,  so that there exists a 
qnet $F$ in $\cF$ whose restriction to these three 
elements is of cycle type. Let $V_0(T)$ be the set of degree-$3$ vertices in $T$ not contained in $V_1(T)$. Lastly, let $V_2(T)$ be the set of interior vertices in $T$ with degree at least 4. 


\begin{Claim}\label{one}
A degree-$3$ vertex $v$ in $T$ is contained  in $V_1(T)$ 
if and only if,  for each subset $Y$ of $X$ of size three that contains precisely one element from each part of ${\mathcal A}_v$, the restriction $F|_{Y}$ is of cycle type for every qnet $F$ in $\cF$ with $Y\subset L(F)$. 
\end{Claim}

\begin{proof} 
Since $\cF$ is 
minimally dense, the ``if\," direction follows directly from the definition of $V_1(T)$.

Conversely, let $Y^*=\{a^*_1,a^*_2,a^*_3\}$ be
such that $a^*_i$, $1\leq i \leq 3$, are all contained in distinct parts of ${\mathcal A}_v$
and there exists a qnet $F^*$ in $\cF$ such that $F^*|_{Y^*}$ is of cycle type. 
Now let $Y=\{a_1,a_2,a_3\}$ with $a_i$ all contained in distinct parts of ${\mathcal A}_v$
and let $F$ be an 
arbitrary qnet in $\cF$ with $Y\subset L(F)$. We shall show that $F|_Y$ is of  
cycle type by considering the size of the intersection $Y\cap Y^*$.

First assume that $|Y\cap Y^*|=3$, that is, $Y=Y^*$. Then, as $\cF$ is consistent, $F|_Y$ 
is of cycle type since it is isomorphic to $F^*|_{Y^*}$. 

Second assume that $|Y\cap Y^*|=2$. By swapping the indices, we may further assume 
that $a_1=a^*_1$, $a_2=a^*_2$, and $a_3\not =a^*_3$. In other words, 
we have $Y=\{a^*_1,a^*_2,a_3\}$. Consider $Y'=\{a^*_1,a^*_2,a_3,a^*_3\}$ 
and let $F'$ be the qnet in $\cF$ with $L(F')=Y'$. Since $a_3,a^*_3$ are both 
contained in $A_v$, the quartet $Q'=a^*_1a^*_2|a_3a^*_3$ is contained in $\cQ(T)$. 
As $F'|_{Y^*}$ is of cycle type, this implies that $F'$ is 
either $\TB{a^*_1}{a^*_2}{a_3}{a^*_3}$ or $\TD{a^*_1}{a^*_2}{a_3}{a^*_3}$.
In both cases $F'|_{Y}$ is of cycle type, and hence $F|_Y$ is also of cycle type 
in view of the consistency of $\cF$.

Next assume that $|Y\cap Y^*|=0$. By swapping the indices, we may further
assume that, for $1\leq i \leq 3$, elements $a_i$ and $a^*_i$ are
contained in the same part of ${\mathcal A}_v$ but $a_i\not = a^*_i$.  
Consider 
the sets $Y_1=\{a^*_1,a^*_2,a_3\}$ and $Y_2=\{a^*_1,a_2,a_3\}$, 
and put $Y_0=Y^*$ and $Y_3=Y$. Then we 
have $|Y_i\cap Y_{i+1}|=2$ for $0\leq i \leq 2$. 
Repeatedly applying the argument used when the size of the intersection is two,
it follows that $F|_Y$ is of cycle type, as required. 
 
Lastly, the case $|Y\cap Y^*|=1$ can be established using a
similar argument to that when the size of the intersection is zero. This
completes the proof of the claim.
\end{proof}
 

Although we will not use this fact later, note that it follows from
Claim~\ref{one} that a vertex $v$ in $T$ is contained in $V_0(T)$ if and only if,
for each subset $Y$ of $X$ of size three whose elements are contained in
distinct elements of $\mathcal A_v$, the restriction $F|_Y$ is a tree type
for every qnet $F$ in $\mathcal F$ with $Y\subset L(F)$.


\begin{Claim}\label{two}
Suppose $v \in V_2(T)$. Let $x,y,p,q \in X$ be contained in 
distinct parts $A_x,A_y,A_p,A_q$ of $\mathcal A_v$, respectively. Then
the qnet $F$ in $\cF$ with support $A = \{x,y,p,q\}$  is of Type IV.
Moreover, if $F$ is $\TD{x}{y}{p}{q}$, then,  
for all $x'\in A_x$, $y' \in A_y$, $p'\in A_p$ and $q'\in A_q$, 
the qnet $F'$ with support  $A' = \{x',y',p',q'\}$ is $\TD{x'}{y'}{p'}{q'}$.
\end{Claim}

\begin{proof}
Suppose $F$ is not of Type IV.
Then $\Sigma(F)$ contains precisely one quartet, denoted by $Q$, and $L(Q)=A$.
This implies that $Q \in \Sigma(\cF)=\cQ(T)$. However, $Q$ is not 
contained in $\cQ(T)$ because the path between any pair of 
distinct elements in $A$ contains $v$; a contradiction. Thus $F$ is of Type IV.

Now, suppose $|A\cap A'|=3$. Then 
we may further assume without loss of generality 
that  $x=x'$, $y=y'$, $p=p'$, and $q\not= q'$. Hence $A'=\{x,y,p,q'\}$. Note that the argument in the last paragraph implies that $F'$ is of 
Type IV. If $F'$ is not isomorphic to  $\TD{x}{y}{p}{q'}$, 
then $F'$ is isomorphic to either $\TD{x}{y}{q'}{p}$ or $\TD{x}{p}{y}{q'}$. 
In the first subcase, since $\cF$ is cyclative
and $\{\TD{x}{y}{p}{q}, \TD{x}{y}{q'}{p}\} \subset \cF$, the
qnet $\TD{p}{q}{y}{q'}$ is contained in $\cF$. 
This implies that the quartet $Q'= py|q q'$ 
is not contained in $\cQ(T)$, a contradiction 
since $q,q'$ are contained in $A_q$ while $p,y$ are contained in $X-A_q$. 
The second subcase follows in a similar way.
 
Lastly, if $|A\cap A'|\leq 2$, then note 
that there exists a list of 4-element subsets 
$A=A_0,\cdots,A_t=A'$ for some $t\geq 1$ such that, for $0\leq i <t$, we have 
$|A_i \cap A_{i+1}|=3$ and the two elements in $(A_i-A_{i+1})\cup (A_{i+1}-A_i)$ are contained in the same part of $\mathcal{A}_v$.
Claim~\ref{two} follows by repeatedly applying the argument in the last paragraph to the list. 
\end{proof}

Using the last claim we next establish the following

\begin{Claim} \label{three}
For each vertex $v \in V_2(T)$, there exists a unique circular ordering of 
the parts $A^1,\dots,A^m$ of $\mathcal A_v$ such that, 
for each tuple $A=(a_i,a_j,a_k,a_l) \in A^i \times A^j \times A^k \times A^l$ 
with $1\leq i<j<k<l \leq m$, the qnet 
in $\cF$ with support $\{{a_i,a_j,a_k,a_l}\}$ is 
isomorphic to $\TD{a_i}{a_j}{a_k}{a_l}$. 
\end{Claim}

\begin{proof}  In light of Claim~\ref{two} we can define a quaternary relation
$||$ on the parts of $\mathcal A_v$ by setting $AB || CD$, for all distinct parts $A,B,C,D \in \mathcal A_v$, 
if and only if, for all $x \in A$, $y \in B$, $p \in C$ and $q \in D$, 
the qnet with support  $\{x,y,p,q\}$ is $\TD{x}{y}{p}{q}$.

Now, for all distinct $A,B,C,D,E \in \mathcal A_v$, we show that\\
\noindent
(BD-1): $AB||CD$ implies $BA||CD$ and $CD||AB$;\\
(BD-2): either $AB||CD$, or  $AC||BD$, or  $AD||BC$ (exclusively);\\
(BD-3): $AC||BD$ and $AD||CE$ implies $AC||BE$.

Indeed, let $x \in A$, $y \in B$, $p \in C$, $q \in D$, $r \in E$.
Then (BD-1) holds since $\TD{x}{p}{y}{q}$ is 
isomorphic to $\TD{y}{p}{x}{q}$ and  to $\TD{p}{x}{q}{y}$. Next, 
(BD-2) follows immediately since $\mathcal F$ is minimally dense. 
To see (BD-3) holds, note that since  $AD||CE$ and $AC||BD$ imply
that  $\TD{x}{r}{q}{p}$ and $\TD{x}{q}{p}{y}$ are contained in $\cF$,
using the fact that $\cF$ is cyclative implies that $\TD{x}{r}{p}{y}$
is in $\cF$,  and hence $AC||EB$ holds. Using (BD-1) 
it follows that $AC||BE$, as required.  

Since the quaternary relation $||$ on $\mathcal A_v$ satisfies the  
conditions (BD-1)--(BD-3) as specified in~\cite[Proposition 1]{bandelt1992canonical}, 
it follows that $||$ determines a unique circular ordering of 
the parts in $\mathcal A_v$ as specified in Claim~\ref{three}. 
\end{proof}

Now let $V'=V_1(T)\cup V_2(T)$, and for each vertex $u\in V'$, fix a 
circular ordering of its neighbourhood $N_u(T)$ induced by the ordering of $\mathcal A_u$  
in Claim~\ref{three} if $u\in V_2(T)$, or the necessarily unique circular ordering (clockwise 
and anticlockwise are treated as the same) of $N_u(T)$ if $u\in V_1(T)$ (and hence $|N_u(T)| =3$).
Let $N$ be the level-1 network obtained from $T$ by 
blowing up each vertex $u$ in $V'$ using the given circular ordering of $N_u(T)$. 
We next show that $\cF\subseteq \cF(N)$. 
To this end, fix four  arbitrary elements $a,b,c,d$ in $X$ and let $F$ be the qnet in $\cF$ 
with support $\{a,b,c,d\}$. We need to show that $F\in \cF(N)$. 
There are four cases depending upon whether $F$ is Type I, II, III, or IV. 

First suppose $F$ is of Type I. Without loss of generality, we may assume 
that $F=\TA{a}{b}{c}{d}$. Let $u=\med_T(a,b,c)$. If $u\in V_1(T)\cup V_2(T)$, then $a,b,c$ 
are contained in three distinct parts in the partition ${\mathcal A}_u$ of $X$ 
on $u$. By Claim~\ref{one} and Claim~\ref{two},  it 
follows that $F|_A$ with $A=\{a,b,c\}$ is of cycle type, a contradiction.
Thus $u\in V_0(T)$ and so there exists a 
cut-vertex in $N$ whose removal induces three connected components, 
containing $a$, $b$ and $c$ respectively. 
Similarly, the median $v=\med_T(a,c,d)$ is contained in $V_0(T)$. 
Hence there exists a cut-vertex in $N$ whose removal induces 
three connected components, containing $a$, $c$ and $d$ respectively. 
Let $F'$ be the qnet in $\cF(N)$ whose support is $\{a,b,c,d\}$. 
Thus, by inspecting all possible qnets on $\{a,b,c,d\}$, it follows  
that $F'$ is isomorphic to $\TA{a}{b}{c}{d}$, and hence $F\in \cF(N)$.

Second, suppose that $F$ is of Type II. Without loss of generality, we may 
assume that $F=\TB{a}{b}{c}{d}$. Let $F'$ be the qnet in $\cF(N)$ 
whose support is $\{a,b,c,d\}$. Let $u$ be the median of $a,c,d$ in $T$. 
Then, by an argument similar to the one used in the last paragraph,
it follows that there exists a 
cut-vertex in $N$ (and hence also a cut-vertex in $F'$) whose removal 
results in three connected components, containing $a$, $c$ and $d$ respectively. 
On the other hand, let $v$ be the median of $A=\{a,b,c\}$ in $T$. Then  
$a,b,c$ are contained in three distinct parts of ${\mathcal A}_v$. 
Since $F|_A$ is of cycle type, by Claim~\ref{two} it follows  
that $v\in V_1(T)\cup V_2(T)$, which implies that $F'|_A$ 
is also of cycle type.  Thus,  by inspecting all possible qnets on $\{a,b,c,d\}$,  
it follows that $F'$ is isomorphic to $\TB{a}{b}{c}{d}$, and hence $F\in \cF(N)$.

Next, suppose that $F$ is of Type III. Without loss of generality, we may 
assume that $F=\TC{a}{b}{c}{d}$. Let $F'$ be the qnet in $\cF(N)$ 
whose support is $\{a,b,c,d\}$. Let $u$ be the median of $A=\{a,b,c\}$ 
in $T$ and $v$ be the median of $B=\{a,c,d\}$ in $T$. Since the 
quartet $ab|cd$ is contained in $\cQ(T)$, we know that $u$ and $v$ 
are distinct. Hence, there exists a cut-edge whose deletion puts $a$
and $b$ in one component and $c$ and $d$ in the other connected component.
By an argument similar to that used for analysing when $F$ is of Type II, it follows that  
$F'|_A$ and $F'|_B$ are both of cycle type. 
Hence, by inspecting all possible qnets on $\{a,b,c,d\}$, the qnet $F'$ is isomorphic to 
$\TC{a}{b}{c}{d}$, and hence $F\in \cF(N)$. 

Lastly, suppose that $F$ is of Type IV.  Without loss of generality, we may assume 
that $F=\TD{a}{b}{c}{d}$. Let $F'$ be the qnet in $\cF(N)$ whose 
support is $A=\{a,b,c,d\}$. Hence, there exists no quartet in $\cQ(\cF)$ 
whose support is $A$. Therefore,  
$\med_T(a,b,c)=\med_T(a,b,d)=\med_T(a,c,d)=\med_T(b,c,d)$. 
Denoting this median by $u$, it follows that $u$ is necessarily contained in 
$V_2(T)$, and hence $N_T(u)$ contains $m \ge 4$ vertices.   Now let
$(v_1,v_2,\ldots,v_m)$  be the unique circular 
ordering of vertices $N_T(u)$ induced by the circular 
ordering $A^1,\dots,A^m$ of $\mathcal A_u$ in Claim~\ref{three}.
Without loss of generality, we may assume that $a\in A^1$. Then there 
exists $1<j<k<l\leq m$ such that $(b,c,d)\in A^j \times A^k \times A^l$. 
By the construction of $N$ (which locally is the blow-up at $u$ with respect 
to the circular ordering), it follows that $F'$ is isomorphic to $F$, and hence $F\in \cF(N)$.  

This  shows that $\cF\subseteq \cF(N)$. Since $\cF$ and $\cF(N)$ are 
both minimally dense, we have $\cF=\cF(N)$. Finally, the uniqueness 
statement concerning $N$ is a direct consequence of the uniqueness of $T$ and the unique 
way in which $N$ is constructed from $T$.
\end{proof}

\section{A characterization of level-1 quartet systems} \label{sec:characterize}

We now use Theorem~\ref{thm:quarnet} to characterize when a quartet
system is equal to the set of quartets displayed by a binary level-1
network. This characterization is given as Theorem~\ref{thm:quartet}. 
Let $\cQ$ be a quartet system on $X$.
A quartet $Q$ in $\cQ$ is {\em distinguished} 
 if $Q$ is the only quartet in $\cQ$ 
with support equal to the leaf-set of $Q$.  
Moreover, a network $N$ is called {\rm 3}-cycle free if it does not
contain any cycle consisting of three vertices. 

\begin{theorem}
\label{thm:quartet}
Let $\cQ$ be a dense quartet system  on $X$ with $|X|\geq 4$. 
Then $\cQ=\cQ(N)$ for some binary level-{\rm 1} network $N$ on $X$
if and only if 
the following three conditions hold: 
\begin{itemize}
\item[(D1)] For all $Y\in {X \choose 4}$, we have $m_\cQ(Y)=1$ or  $m_\cQ(Y)=2$. \\
\item[(D2)]  If $\{ab|cd, ad|bc, ax|cd, ac|xd\}\subseteq \cQ$, 
then $\{ab|dx,bd|ax\}\subseteq \cQ$, for $a,b,c,d\in X$ distinct. \\
\item[(D3)] If $ab|cd$ is a distinguished quartet in $\cQ$, then,
  for each $x\in X-\{a,b,c,d\}$ where $a,b,c,d\in X$ are distinct, either $ax|cd$ 
or $ab|cx$ is a distinguished quartet in $\cQ$.  
\end{itemize}
Moreover, if $\cQ$ satisfies {\rm (D1)--(D3)}, then there exists a unique 
level-{\rm 1}, {\rm 3}-cycle free network $N$ with $\cQ(N)=\cQ$.
\end{theorem}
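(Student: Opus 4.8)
The plan is to deduce both directions from Theorem~\ref{thm:quarnet} and Theorem~\ref{thm:quartet:tree} by passing between the quartet system $\cQ$ and associated qnet data. Throughout I use that a quartet $ab|cd$ is displayed by $N$ if and only if it is displayed by $N|_{\{a,b,c,d\}}$, so that $\cQ(N)$ is the union over all $Y\in\binom{X}{4}$ of the quartets displayed by the qnet $N|_Y$; a Type~I, II or III qnet displays exactly one quartet (which it also exhibits), while a Type~IV qnet displays exactly two. In particular $m_\cQ(Y)=1$ precisely when $N|_Y$ has a non-trivial cut-edge, and $m_\cQ(Y)=2$ precisely when $N|_Y$ is of cycle type.

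For the forward direction, suppose $\cQ=\cQ(N)$ and set $\cF=\cF(N)$. By the easy direction of Theorem~\ref{thm:quarnet}, $\cF$ is minimally dense, consistent, cyclative and saturated. Condition (D1) is then immediate from the displayed-quartet count above. For (D2), the hypotheses force $m_\cQ(\{a,b,c,d\})=m_\cQ(\{a,c,d,x\})=2$, so the corresponding qnets are $\TD{a}{b}{c}{d}$ and $\TD{x}{a}{c}{d}$; cyclativity then yields $\TD{a}{b}{d}{x}\in\cF$, whose two displayed quartets are exactly $ab|dx$ and $bd|ax$. For (D3), a distinguished quartet $ab|cd$ corresponds to an $m=1$ qnet of Type~I, II or III, and applying the appropriate axiom among (S1)--(S3) (up to relabelling) produces a qnet on $\{a,b,c,x\}$ or $\{a,c,d,x\}$ that again has a non-trivial cut-edge, i.e.\ a distinguished quartet $ab|cx$ or $ax|cd$.

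For the converse, let $\cQ^\ast$ denote the set of distinguished quartets of $\cQ$. Since distinct distinguished quartets have distinct supports, $\cQ^\ast$ is thin, and (D3) says precisely that $\cQ^\ast$ is saturated; hence Theorem~\ref{thm:quartet:tree} supplies a unique phylogenetic $X$-tree $T$ with $\cQ(T)=\cQ^\ast$. The key structural observation is that, by (D1), $m_\cQ(Y)=2$ holds exactly for those $Y$ whose four leaves lie on four distinct sides of a common interior vertex of $T$ of degree at least four (these are the $4$-sets carrying no distinguished quartet, hence unresolved in $T$), whereas every degree-$3$ vertex leaves all of its $4$-sets resolved. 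I then construct $N$ by blowing up each vertex of degree at least four, using a circular ordering of its incident edges read off from the Type~IV qnets, and leaving every degree-$3$ vertex untouched; since every newly created cycle has length at least four, $N$ is $3$-cycle free.

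The main obstacle is to show that these circular orderings are well defined and that $\cQ(N)=\cQ$, and here I expect to re-use, essentially verbatim, the argument of Claims~\ref{two} and~\ref{three}. For a vertex $v$ of degree at least four I define a quaternary relation on the parts of $\mathcal A_v$ by declaring $AB\,\|\,CD$ when the two quartets on a transversal $4$-set are those displayed by the $4$-cycle in cyclic order $(A,B,C,D)$; (D1) makes this relation total, independence of the chosen transversal is handled as in Claim~\ref{two} via (D2), and (D2) --- which is cyclativity rewritten in terms of displayed quartets --- supplies the transitivity needed to invoke the circular-ordering criterion of \cite[Proposition 1]{bandelt1992canonical}, yielding a unique cyclic ordering. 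Finally I check $\cQ(N)=\cQ$ support-by-support: an $m=2$ set becomes a $4$-cycle displaying precisely its two quartets by construction, while an $m=1$ set $Y$ with distinguished quartet $ab|cd$ retains its separating cut-edge (blow-ups turn branch edges into cut-edges), so at most three of $a,b,c,d$ meet any single new cycle and $N|_Y$ is of Type~I, II or III displaying exactly $ab|cd$. Uniqueness of the $3$-cycle free $N$ then follows since any such network has $\Sigma(N)=\cQ^\ast$, forcing its underlying tree to be $T$ and its cycles (all of length at least four) to sit at the degree-$\ge 4$ vertices with the orderings already determined by the $m=2$ data.
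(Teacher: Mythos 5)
Your argument is correct, and the first half of your converse coincides with the paper's: both isolate the distinguished quartets, observe that (D3) makes this thin system saturated, and obtain the tree $T$ from Theorem~\ref{thm:quartet:tree}. Where you diverge is in how the network is produced from $T$. The paper does not blow up $T$ directly; instead it assembles an explicit minimally dense qnet system $\cF$ (a cycle qnet $\TD{a}{b}{c}{d}$ for each $4$-set with $m_\cQ=2$, and a suitably blown-up quartet tree for each $4$-set with $m_\cQ=1$, the blow-ups dictated by the degrees of the medians in $T$), verifies that $\cF$ is consistent, cyclative and saturated, and then invokes Theorem~\ref{thm:quarnet} as a black box to get $N$ with $\cF(N)=\cF$ and hence $\cQ(N)=\cQ$. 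You bypass the qnet system and instead re-run the internal machinery of the proof of Theorem~\ref{thm:quarnet} --- the quaternary relation on the parts of ${\mathcal A}_v$, conditions (BD-1)--(BD-3), and the Bandelt--Dress circular-ordering criterion --- directly in quartet language, with (D1) supplying totality and (D2) (which is exactly cyclativity rewritten for displayed quartets) supplying transversal-independence and transitivity. Both routes work: the paper's costs the somewhat fiddly verification that the constructed $\cF$ is consistent and saturated, but then existence, the circular orderings and uniqueness of $N$ come for free from Theorem~\ref{thm:quarnet}; yours avoids introducing $\cF$ but duplicates Claims~\ref{two} and~\ref{three} and must check $\cQ(N)=\cQ$ and uniqueness by hand (your uniqueness sketch is at the same level of detail as the paper's, which simply cites the fact that two binary level-1 networks display the same quartet system if and only if they differ only by $3$-cycles).
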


\begin{proof}
  It is easily checked that,  if $\cQ=\cQ(N)$ holds for some binary
  level-1 network $N$, then (D1)--(D3) 
  holds.  Conversely, let $\cQ$ be a dense quartet system
  satisfying (D1)--(D3). Let $\cQ_1 \subseteq \cQ$ 
  be  the set consisting of the distinguished quartets contained in $\cQ$.
  We first associate a phylogenetic $X$-tree $T$  to $\cQ_1$.
If $\cQ_1=\emptyset$, then we let $T$ denote the phylogenetic $X$-tree 
which contains precisely one vertex that is not a leaf (i.e. a``star tree'').
If $\cQ_1 \neq \emptyset$, then let $Q=ab|cd$ be some quartet
contained in $\cQ_1$, $a,b,c,d\in X$. Suppose that there exists some $x \in  X - \{a,b,c,d\}$.
Then by (D3), either $ax|cd \in \cQ_1$ or $ab|cx \in \cQ_1$. It follows
that $\bigcup_{Q \in \cQ_1} L(Q) = X$. Moreover, as $\cQ_1$ is clearly
thin and by (D3) $\cQ_1$ is saturated, it follows by Theorem~\ref{thm:quartet:tree},  
that there exists a phylogenetic $X$-tree $T$ with $\cQ(T)=\cQ_1$.

Now we construct a qnet system $\cF$ as follows. 
Let $\Pi_1$ be the subset of ${X \choose 4}$ consisting of
those $Y$ with $m_\cQ(Y)=1$, and $\Pi_2= {X \choose 4} \setminus \Pi_1$. 
To each  $\pi=\{a,b,c,d\} \in \Pi_1$ we associate a qnet $F(\pi)$ 
as follows. Swapping the labels of the elements in $\pi$ if necessary, we may assume 
that $Q=ab|cd$ is the (necessarily unique) quartet in $\cQ_1$ with leaf-set $\pi$. Now 
let $v_1$ and $v_1'$ be the median of $\{a,b,c\}$ in $Q$ and $T$, respectively. 
Similarly, let $v_2$ and $v_2'$ be the median of $\{a,c,d\}$ in $Q$ and $T$, respectively. 
Then $F(\pi)$ is the qnet on $\{a,b,c,d\}$ obtained from $Q$ by 
performing a blow-up on each of $v_i$, where $i\in \{1,2\}$, if and only if 
the degree of $v'_i$ in $T$ is at least four. 

We also associate a qnet $F(\pi)$ to each  $\pi=\{a,b,c,d\} \in \Pi_2$ 
as follows.  Swapping the labels of the elements in $\pi$ if 
necessary, we may assume that the quartets in $\cQ$ 
with leaf-set $\{a,b,c,d\}$ are $ab|cd$ and $ad|bc$.
We then define $F(\pi)$ to be the qnet $\TD{a}{b}{c}{d}$.  

Now, let $\mathcal F=\{F(\pi): \pi\in \binom{X}{4}\}$. By construction $\cF$ is minimally dense. 
Moreover, $\cQ(\cF)=\cQ$, and $\cF$ is cyclative in view of (D2).  

Next, we shall show that $\cF$ is consistent. Fix a subset $\{a,b,c\} \in {X \choose 3}$ 
and consider its median $v$ in $T$. By construction, it suffices to establish the
claim that the degree of $v$ is three in $T$ if and only if, for each $d\in X-\{a,b,c\}$,
the set $\pi=\{a,b,c,d\}$ is not contained in $\Pi_2$. 

To see that this claim holds
first note that if $v$  has degree three,  then each of the three 
components of $T-\{v\}$ 
contains precisely one element 
in $\{a,b,c\}$.  Without loss of generality, we may assume that element $d$ is contained in 
the connected component containing element $c$. But this implies that $ab|cd$ is a 
quartet in $\cQ(T)$, and hence $\{a,b,c,d\} \in \Pi_1$. 
On the other hand, if $v$ has degree at least four, then there exists an 
element $x\in X-\{a,b,c\}$ such that $x,a,b,c$ belong to four different 
connected components of $T-\{v\}$. Therefore, $\cQ(T)$ 
and $\{ab|cx,ac|bx,ax|bc\}$ are disjoint. This implies that $\pi=\{a,b,c,x\}$ is not contained in $\Pi_1$, and 
so it is contained in $\Pi_2$. This establishes the claim. 

Next, we show that $\cF$ is saturated. We shall show that
(S2) holds; the fact that $\cF$ satisfies (S1) and (S3) can be established by a similar argument.  
Let $\{a,b,c,d\} \in {X \choose 4}$ be a set that  
satisfies the condition in (S2), that is, $\TB{a}{b}{c}{d}$ 
is contained in $\cF$. 
Then $ab|cd$ 
is a quartet in $\cQ_1=\cQ(T)$. Furthermore, put $u=\med_T(a,b,c)$ and $v=\med_T(a,c,d)$, then the degree of $u$ is at least four and the degree of $v$ is three. 
 Now, fix an element $x\in X-\{a,b,c,d\}$. 
If $x$ and $a$ are in the same connected component resulting from 
deleting $v$ from $T$, then $ax|cd$ is a quartet in $\cQ_1$. Since 
the median of $a,c,d$ in $T$ has degree three, by construction 
either $\TA{a}{x}{c}{d}$ or $\TB{a}{x}{c}{d}$ (but not both) is contained in $\cF$.
Otherwise,  $ab|cx$ is a quartet in $\cQ_1$. Since  
the median $u$ of $a,b,c$ in $T$ has degree greater than three, by construction we can conclude that either $\TB{a}{b}{c}{x}$ or $\TC{a}{b}{c}{x}$ is contained in $\cF$ (but not both). 
This completes the verification of (S2). 

It follows that $\cF$ is minimally dense, cyclative, consistent and saturated. By 
Theorem~\ref{thm:quarnet}, there exists a unique binary level-1 network $N$ on $X$such 
that $\cF(N)=\cF$. By construction, it also follows 
that $\cQ(N)=\cQ(\cF(N))=\cQ(\cF)=\cQ$.
The uniqueness statement in the theorem follows from the uniqueness of $N$ and the fact 
that $\cQ(N)=\cQ(N')$ for two binary level-1 networks $N$ and $N'$ if and only 
if $N$ and $N'$ on $X$ differ only by 3-cycles (see e.g. \cite[Lemma 2]{kp14}).
\end{proof}

\section{Quarnet inference rules and closure} \label{sec:closure}

For a quartet system $\cQ$ on $X$, we write $\cQ \vdash ab|cd$ precisely 
if every phylogenetic $X$-tree that displays $\cQ$ also displays $ab|cd$. 
The statement $\cQ \vdash ab|cd$ is known as a {\em quartet inference rule} \cite{SS03}. 
A well-known example of such a rule is
$$
\{ab|cd,ac|de\} \vdash ab|ce 
$$
which leads to the concept of the {\em semi-dyadic closure} $\closure(\cQ)$ of
the set $\cQ$, that is, the minimal set of quartets that contains $\cQ$ 
and has the property that if 
$\{ab|cd,ac|de\} \subseteq \closure(\cQ)$, then $ab|ce\in \closure(\cQ)$. 

In this section, we define analogous concepts for qnets and show that they
have similar properties to those enjoyed by phylogenetic trees.
If  $\cF$ is a qnet system, we write $\cF \vdash F$
for some qnet $F$ if every binary level-1 network that displays $\cF$ 
also displays $F$. Now, let $\oalpha, \obeta,\ogamma$ 
denote symbols in $\{\ominus, \oplus\}$. For example, $a\oalpha b | c \obeta d$ is 
equivalent to $a\ominus b|c \oplus d$ when $\oalpha=\ominus$ and $\obeta = \oplus$.
We introduce three qnet inference rules on $\cF$: 

\begin{itemize}
\item[(CL1):] 
$
\{a\oalpha b | c \obeta d, b\obeta c | d \ogamma e\} \vdash a \oalpha b | c \obeta e
$
for all $\oalpha, \obeta,\ogamma \in \{\ominus, \oplus\}$; \\

\item[(CL2):] 
$\{a \oplus b | c \oalpha d, \TD{a}{c}{e}{b}\} \vdash a\oplus e | c \oalpha d$ and
$\{a \oplus b | c \oalpha d, \TD{a}{c}{b}{e}\} \vdash a\oplus e | c \oalpha d$ and 
$\{a \oplus b | c \oalpha d, \TD{a}{e}{c}{b}\} \vdash a\oplus e | c \oalpha d\,\,$ 
for all $\oalpha \in \{\ominus, \oplus\}$; \\

\item[(CL3):] 
$
\{\TD{a}{b}{c}{d},\TD{e}{a}{c}{d}\} \vdash \TD{a}{b}{d}{e}.
$
\end{itemize}

\noindent
We illustrate two of these rules in Figure~\ref{infer}.

\begin{figure}[ht]
	\centering
	\includegraphics[scale=0.8]{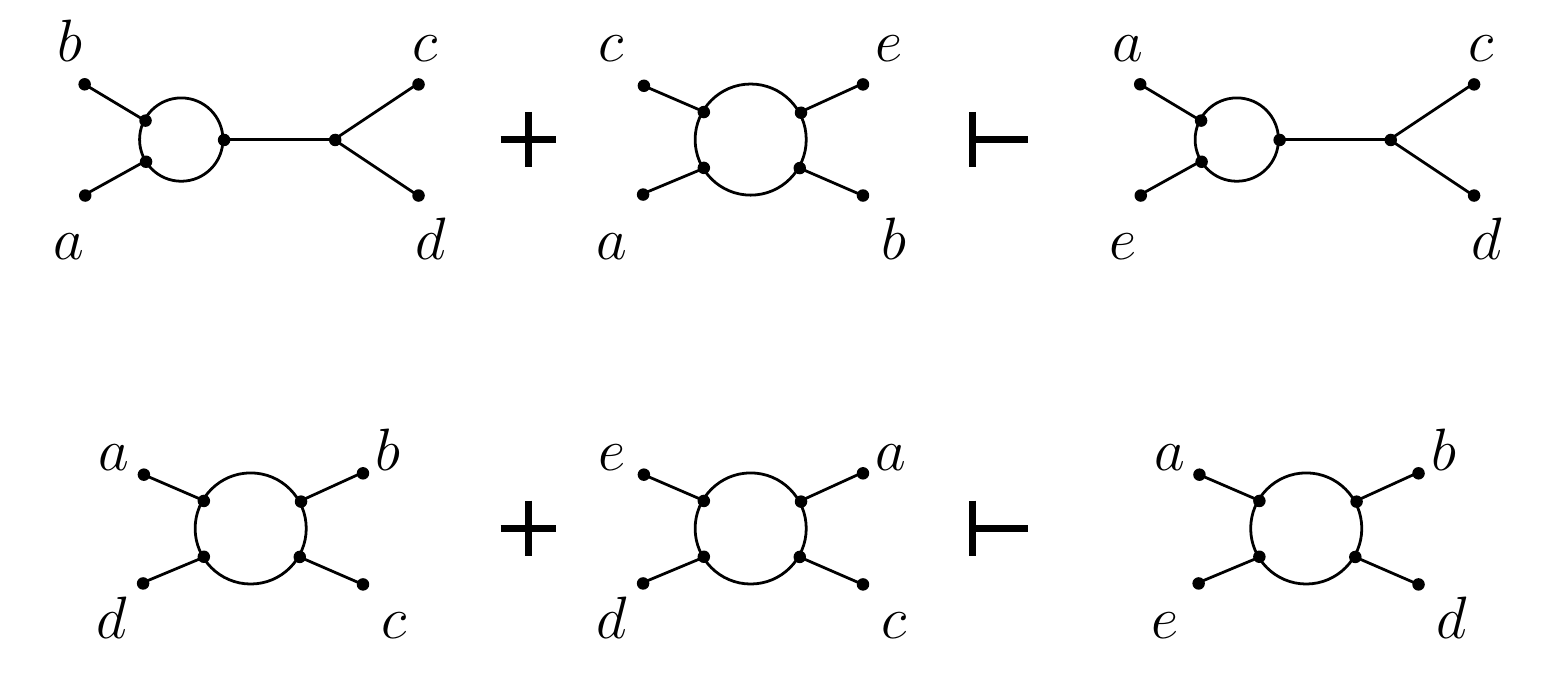}
	\caption{An illustration of the (CL2)  and  (CL3) inference rules. Top:
          The first part of the (CL2) inference rule with $\oalpha=\ominus$.
          Bottom: the (CL3) inference rule.}
	\label{infer}
\end{figure}

Using Theorem~\ref{thm:quarnet}, it is straightforward to show that the above
three rules are well defined. That is, given three qnets $F_1$, $F_2$ and $F$
such that $\{F_1,F_2\} \vdash F$ holds for one of the above three rules, then
every binary level-1 network that displays $\{F_1,F_2\}$ must display $F$. 

For a qnet system $\cF$, we define the set $\closure(\cF)$ to be the minimal
qnet system (under set-inclusion) that contains $\cF$ 
such that if $\closure(\cF) \vdash F$ holds under (CL1)-(CL3), then
$F\in \closure(\cF)$ holds.
We call $\closure(\cF)$ the {\em closure} of $\cF$.

The following key proposition is analogous to that for semi-dyadic closure for quartet systems
(cf. \cite{M83} and \cite[Proposition 2.1]{HMSS}). It follows from the fact that
the closure of a qnet system $\cF$ can clearly be obtained from $\cF$ by
repeatedly applying the qnet rules (CL1)--(CL3) until the sequence of sets so
obtained stabilizes. Note that this process must clearly terminate in polynomial time. 

\begin{proposition}
Let $\cF$ be a qnet system and let $N$ be a binary, level-{\rm 1} network. 
Then $N$ displays $\cF$ if and only if $N$ displays  $\closure(\cF)$.
\end{proposition}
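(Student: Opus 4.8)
The plan is to prove both directions of the equivalence by exploiting the fact, asserted just before the proposition, that each of the rules (CL1)--(CL3) is sound: whenever $\{F_1,F_2\}\vdash F$ holds under one of these rules, every binary level-1 network displaying $\{F_1,F_2\}$ also displays $F$. One direction is immediate. Since $\cF\subseteq\closure(\cF)$, if $N$ displays $\closure(\cF)$ then $N$ trivially displays $\cF$. So the content is entirely in the forward direction: I must show that if $N$ displays $\cF$, then $N$ displays every qnet in $\closure(\cF)$.

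For the forward direction I would argue by induction on the number of rule applications used to generate a given qnet of $\closure(\cF)$. Concretely, recall that (as noted in the excerpt) $\closure(\cF)$ can be obtained from $\cF$ by repeatedly applying (CL1)--(CL3) until the sequence of qnet systems stabilizes; write $\cF=\cF_0\subseteq\cF_1\subseteq\cdots\subseteq\cF_t=\closure(\cF)$ for the resulting chain, where each $\cF_{i+1}$ is obtained from $\cF_i$ by one rule application. The inductive hypothesis is that $N$ displays every qnet in $\cF_i$. For the inductive step, the single new qnet $F$ added to form $\cF_{i+1}$ arises as the conclusion of one of the rules applied to two qnets $F_1,F_2\in\cF_i$ with $\{F_1,F_2\}\vdash F$. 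By the induction hypothesis $N$ displays both $F_1$ and $F_2$, and then soundness of the rule (the well-definedness established via Theorem~\ref{thm:quarnet}) guarantees that $N$ displays $F$. Since $N$ displays $\cF_0=\cF$ by assumption, it follows that $N$ displays $\cF_t=\closure(\cF)$, as required.

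I expect the main (and essentially the only) obstacle to be making the induction on the closure-generation process airtight, since $\closure(\cF)$ is defined as a minimal set rather than as an explicit chain. The cleanest way to handle this is to observe that the iterative application of (CL1)--(CL3) produces a monotone increasing sequence of qnet systems that stabilizes (this is exactly the remark preceding the proposition, together with the polynomial-time termination statement), and that its limit coincides with $\closure(\cF)$ precisely because $\closure(\cF)$ is the minimal system closed under the three rules and containing $\cF$. Once this identification is in hand, the induction above goes through mechanically; all the genuine mathematical work—that each individual rule preserves displayability—has already been deferred to the soundness of (CL1)--(CL3) proved via Theorem~\ref{thm:quarnet}, which I am entitled to assume. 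Thus the proposition is a short structural consequence of rule soundness rather than requiring any fresh network-theoretic argument.
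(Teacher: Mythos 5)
Your proposal is correct and follows essentially the same route as the paper, which justifies the proposition by noting that $\closure(\cF)$ is obtained from $\cF$ by iterated application of the sound rules (CL1)--(CL3) until stabilization; you have simply made the induction on rule applications and the identification of the iterative limit with the minimal closed set explicit. No gap.
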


We now show that $\closure(\cF)$ behaves in a similar way to the
semi-dyadic closure of a quartet system (cf.\cite[Exercise 19, p. 143]{SS03}).

\begin{theorem}
\label{thm:closure}
Suppose that $\cF$ is a minimally dense, consistent set of qnets on $X$ with $|X|\geq 5$. 
Then the following statements are equivalent:
\begin{itemize}
\item[(i)] $\cF=\cF(N)$ holds for a (necessarily unique) binary,
  level-{\rm 1} network $N$ on $X$; \\

\item[(ii)] $\closure(\cF)=\cF$;\\

\item[(iii)] For every {\rm 3}-element subset $\cF'$ of $\cF$,
  the subset $\cF'$ is displayed by some binary level-{\rm 1} network on $X$.
\end{itemize}
\end{theorem}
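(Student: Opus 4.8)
The plan is to establish the cyclic chain of implications $(i)\Rightarrow(ii)\Rightarrow(iii)\Rightarrow(i)$, leaning on Theorem~\ref{thm:quarnet} as the main engine. For $(i)\Rightarrow(ii)$, suppose $\cF=\cF(N)$. The preceding proposition tells us that $N$ displays $\closure(\cF)$, so every qnet in $\closure(\cF)$ is of the form $N|_A$ for some $A\in\binom{X}{4}$ and hence lies in $\cF(N)=\cF$; combined with $\cF\subseteq\closure(\cF)$ this gives $\closure(\cF)=\cF$. For $(ii)\Rightarrow(iii)$, I would argue contrapositively or directly: given any three qnets $\cF'=\{F_1,F_2,F_3\}\subseteq\cF=\closure(\cF)$, I want to produce a single binary level-1 network displaying all three. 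The natural route is to show that a three-element subset of a closed, minimally dense, consistent system can itself be completed to (or is witnessed by) a network; the cleanest argument is that since $\cF$ is closed, any qnet inferable from $\cF'$ via (CL1)--(CL3) already lies in $\cF$, and one checks that three qnets whose pairwise restrictions to triples agree (consistency) and that are closed under the three rules on their combined support are realizable — here the support of $\cF'$ has size at most $12$, but the key observation is that displayability of a finite qnet set is a ``local'' condition reducible to the five-element subsets appearing in the rule hypotheses.

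The substantive direction is $(iii)\Rightarrow(i)$, and this is where Theorem~\ref{thm:quarnet} does the heavy lifting. Since $\cF$ is already assumed minimally dense and consistent, by that theorem it suffices to verify that $\cF$ is \emph{cyclative} and \emph{saturated}. Both of these properties are statements quantified over five-element subsets $\{a,b,c,d,x\}\in\binom{X}{5}$, and each involves at most three qnets (for cyclativity, the two hypothesis qnets and one conclusion qnet; for each saturation rule (S1)--(S3), a single hypothesis qnet together with the required alternatives). The idea is: fix such a five-element subset and consider the relevant qnets in $\cF$ supported within it. These form a three-element (or smaller) subset $\cF'\subseteq\cF$, which by hypothesis $(iii)$ is displayed by some binary level-1 network $N'$ on $X$. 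Restricting $N'$ to the five points, and using that $N'$ itself satisfies cyclativity and saturation (because $N'$ is a genuine network, so $\cF(N')$ obeys the conditions by the easy direction of Theorem~\ref{thm:quarnet}), forces the corresponding inference to hold among the qnets of $\cF$ supported on that five-set. Consistency of $\cF$ then guarantees these agree with $\cF$'s own qnets, yielding the required membership.

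Concretely, to check cyclativity: suppose $\{\TD{a}{b}{c}{d},\TD{x}{a}{c}{d}\}\subseteq\cF$. Take $\cF'$ to be these two qnets together with the (unique, by minimal density) qnet of $\cF$ on $\{a,b,d,x\}$; by $(iii)$ some $N'$ displays $\cF'$. Then $N'$ exhibits both Type~IV qnets, and since $\cF(N')$ is cyclative it must contain $\TD{a}{b}{d}{x}$ as $N'|_{\{a,b,d,x\}}$; but the qnet $N'$ assigns to $\{a,b,d,x\}$ is exactly the member of $\cF'\subseteq\cF$ on that support, so $\TD{a}{b}{d}{x}\in\cF$. The saturation checks follow the same template, each time bundling the hypothesis qnet with the small number of relevant qnets of $\cF$ on the five-point support into a three-element $\cF'$, invoking $(iii)$ to realize it, and transferring the network-level property back.

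The main obstacle I anticipate is the $(ii)\Rightarrow(iii)$ step and, more subtly, the bookkeeping in $(iii)\Rightarrow(i)$ to ensure each invocation of $(iii)$ uses a set of size at most three. One must be careful that the qnets whose existence is asserted in the cyclativity and saturation conclusions are precisely the minimally-dense representatives on the relevant four-element supports, so that identifying ``the member of $\cF$ on support $A$'' with ``$N'|_A$'' is legitimate; this is exactly what consistency and minimal density buy us. I would verify once, carefully, that the supports involved in each rule touch at most three distinct four-element subsets so that $\cF'$ can be chosen of size three, and then treat the remaining cases as routine analogues. The hypothesis $|X|\ge 5$ is needed precisely so that the five-element subsets in the definitions of cyclative and saturated are nonempty, making these conditions meaningful.
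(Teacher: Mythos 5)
Your overall architecture differs from the paper's in one crucial place, and that difference hides a genuine gap. The paper does not prove $(ii)\Rightarrow(iii)$ at all: it proves $(i)\Rightarrow(ii)$, $(i)\Rightarrow(iii)$, $(ii)\Rightarrow(i)$ and $(iii)\Rightarrow(i)$. The implication $(ii)\Rightarrow(i)$ is where the real work from hypothesis $(ii)$ lives: one must show that a minimally dense, consistent system closed under (CL1)--(CL3) is cyclative and saturated, so that Theorem~\ref{thm:quarnet} applies. Cyclativity is immediate from (CL3), but saturation requires a careful case analysis: first one disposes of the case where some qnet $F_p$ on a $4$-subset of the relevant $5$-set is of Type IV by applying (CL2), and then (S1)--(S3) are verified via several applications of (CL1) combined with minimal density to rule out the bad configurations. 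Your chain $(i)\Rightarrow(ii)\Rightarrow(iii)\Rightarrow(i)$ shifts all of that burden onto $(ii)\Rightarrow(iii)$, and your argument there is essentially absent. The claim that ``displayability of a finite qnet set is a local condition reducible to the five-element subsets appearing in the rule hypotheses'' is unjustified; it is, in effect, a restatement of what the theorem is trying to prove. Note also that a $3$-element subset $\cF'$ is not minimally dense on its support, so Theorem~\ref{thm:quarnet} cannot be applied to $\cF'$ directly; the only natural minimally dense system containing $\cF'$ is $\cF$ itself, which sends you back to needing $(ii)\Rightarrow(i)$. So the content you are missing is precisely the derivation of (S1)--(S3) from the closure rules.

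Your $(iii)\Rightarrow(i)$, by contrast, is essentially the paper's argument and is sound: reduce to checking cyclativity and saturation, bundle the hypothesis qnet with the (at most two) other qnets of $\cF$ on the relevant $4$-subsets of the $5$-set into a $3$-element $\cF'$, invoke $(iii)$ to get a network $N$ displaying $\cF'$, restrict to $N'=N|_Y$, and use that $\cF(N')$ is minimally dense, cyclative and saturated (by the easy direction of Theorem~\ref{thm:quarnet}) to force the required qnet into $\cF$ --- for saturation the contradiction comes from $\cF(N')$ having to contain two distinct qnets on the same $4$-set. Your $(i)\Rightarrow(ii)$ via the preceding proposition is also fine. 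To repair the proposal, replace the attempted $(ii)\Rightarrow(iii)$ with a direct proof of $(ii)\Rightarrow(i)$ along the lines above; $(i)\Rightarrow(iii)$ is then immediate.
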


\begin{proof}
  The fact that (i) implies (ii) and  (i) implies (iii) are straightforward. We
  complete the proof by showing that (ii) implies (i) and (iii) implies (i). 

For the proof of (ii) implies (i), suppose that $\closure(\cF)=\cF$. Note first that by (CL3) 
$\cF$ is cyclative. 
Moreover, $\cF$ is minimally dense and consistent by assumption. 
Hence, by Theorem~\ref{thm:quarnet}, it suffices to show that 
$\cF$ is saturated. To this end, let $w,x,y,z,t$ be five pairwise distinct
elements in $X$ such that 
$F=w\oalpha x|y \obeta z$ is contained in $\cF$ with $\oalpha,\obeta \in \{\oplus,\ominus\}$
and $(\oalpha,\obeta)\not =(\ominus,\oplus)$. 
We need to show that $\cF$ satisfies (S1)--(S3). 


For $p \in \{w,x,y,z\}$, let $F_p$ be the qnet  
on $\{w,x,y,z,t\}-\{p\}$ that is contained in $\cF$ (which 
must exist as $\cF$ is minimally dense). First assume 
that there exists some element $p$ in $\{w,x,y,z\}$ such that 
the qnet $F_p$ is of Type IV. Without 
loss of generality, assume $p=w$ (the other cases can be established in a similar manner). 
Since $F_w$ is of Type IV, by the consistency of $\cF$ we have
 $F=y\oplus z|w\oalpha x$. 
Now, applying (CL2) with $a=y$, $b=z$, $c=w$, $d=x$, $e=t$ 
implies that $y\oplus t|w \oalpha x \in\closure(\cF)=\cF$, by (ii).
Therefore,  $\cF$ satisfies (S2) and (S3)
(corresponding, respectively, to taking $\oalpha=\ominus$ and
$\oalpha=\oplus$). 
It follows that in the remainder of the proof we can assume that
none of the qnets in $\{F_w,F_x,F_y,F_z\}$ is 
of Type IV. 


For convenience, in the following, we will use the convention that
when we apply (CL1), we will write a 5-tuple and assume that the 
$i$-th element in the 5-tuple will correspond to 
the $i$-th element in the tuple $(a,b,c,d,e)$ of elements
used in (CL1) for $1\leq i \leq 5$.


To show that $\cF$ satisfies (S1), suppose that $F=\TA{w}{x}{y}{z}$. 
Note first that if $F_x=w\ominus y|z \oalpha t$, then applying (CL1) to 
$(x,w,y,z,t)$ implies $x\ominus w |y \ominus t \in  \closure(\cF)=\cF$, and hence (S1) holds. 
Similarly, if $F_z=w\ominus y|x \oalpha t$, then applying (CL1) to $(z,y,w,x,t)$ 
implies $z\ominus y |w \ominus t \in \cF$, and hence (S1) holds. 
Therefore, if (S1) does not hold, then, by consistency, we may assume  $F_x=w\ominus z|y \oalpha t$ 
and $F_z=x\ominus y|w \oalpha t$ with $\oalpha \in \{\ominus, \oplus\}$. 
Considering $F_x$ and $F_z$, and applying (CL1) to $(x,y,t,w,z)$ implies
$x\ominus y | t \oalpha z \in\cF$. 
On the other hand, considering $F$ and $F_z$ and 
applying (CL1) to $(z,y,x,w,t)$ implies that $z\ominus y | x \ominus t \in \cF$, 
a contradiction to the fact that $\cF$ is minimally dense. Thus $\cF$ satisfies (S1). 
 
Using an argument similar to the one that we used to show that
$\cF$ satisfies (S1), it is straightforward to deduce that $\cF$ satisfies (S2) and (S3).

We next prove that (iii) implies (i).  Since $\cF$ is minimally dense and consistent by assumption, 
it follows by Theorem~\ref{thm:quarnet} that it suffices to show that $\cF$ is cyclative and saturated.

First we show that $\cF$ is cyclative. If not, then there exists 
five elements $Y=\{w,x,y,z,t\}$ such that $F_1=\TD{w}{x}{y}{z}$ and $F_2=\TD{t}{w}{y}{z}$ 
are contained in $\cF$ but $F=\TD{w}{x}{z}{t}$ is not contained in $\cF$. 
Let $F'$ be the (necessarily unique) qnet in $\cF$ whose leaf 
set is $\{w,x,z,t\}$. Then $F'\not = F$. Consider the set $\cF'=\{F',F_1,F_2\}$.
The assumption (iii) implies that $\cF'$ is displayed by a binary
level-1 network $N$ on $X$. Consider $N'=N|_Y$. Then $\cF'\subseteq \cF(N')$. 
By Theorem~\ref{thm:quarnet}, $\cF(N')$ is minimally dense and cyclative. 
Since $\{F_1,F_2\}\subseteq \cF(N')$, it follows that $F\in \cF(N')$, a contradiction 
in view of $F' \in \cF(N')$.

Second we show that $\cF$ is saturated. Here we only 
show that $\cF$ satisfies (S2) as showing that $\cF$
satisfies (S1) and (S3) can be done in a similar manner. If $\cF$ does not satisfy (S2),  
then there exists a $5$-element set $Y=\{w,x,y,z,t\}$ such 
that $F=\TB{w}{x}{y}{z}$ is contained in $\cF$ while, for the qnet system 
$$
\cF^* = \{\TB{w}{x}{y}{t}, \TC{w}{x}{y}{t}, \TA{w}{t}{y}{z}, \TB{w}{t}{y}{z} \},
$$
we have $\cF^*\cap \cF=\emptyset$.  Let $F_1$  and $F_2$ be the qnets 
in $\cF$ with leaf sets $A=\{w,x,y,t\}$ and $B=\{w,t,y,z\}$, respectively
which must exist as $\cF$ is minimally dense by assumption.  
Then neither $F_1$ nor $F_2$ is contained in $\cF^*$.  

Lastly, consider the subset $\cF'=\{F,F_1,F_2\}$ of $\cF$. Then 
as assumption (iii) holds it follows
that $\cF'$ is displayed by a binary level-1 network $N$ on $X$. 
Consider $N'=N|_Y$. Then $\cF'\subseteq \cF(N')$. 
By Theorem~\ref{thm:quarnet}, $\cF(N')$ is minimally dense and saturated. Using 
the fact that $\cF(N')$ is saturated, it follows that  $\cF^*\cap \cF(N') \not = \emptyset$
as $F \in \cF(N')$. 
Therefore,  $\cF(N')$ contains either two distinct qnets 
on $A$ or two distinct qnets on $B$, a contradiction to the fact that $\cF(N')$ is minimally dense.
Thus (iii) implies (i), thereby completing the proof of the theorem. 
\end{proof}

Note that it follows from Theorem~\ref{thm:closure} that we can decide whether or
not a given minimally dense set of qnets $\cF$ is displayed by a level-1 binary phylogenetic 
network in polynomial time since, as observed above, we can compute 
$\closure(\cF)$ in polynomial time.

\section{Discussion} \label{sec:discuss}

We have shown that by considering quarnets we can 
define natural inference rules, as well as the concept of quarnet closure. With 
quartets there are various types of inference rules, which imply 
alternative definitions of closure for quartet systems (see e.g. \cite{BS95,SS03}).
It would thus be of interest to explore whether there are other 
types of inference rules for quarnets and, if so, what their properties are.
In this paper, we have focused on understanding the closure  for a
minimally dense set of quarnets. 
For real data, there can be cases where it may be necessary to consider
non-minimally dense sets (e.g.
in case there is missing data). Hence it could be useful to develop
results for such 
situations. However, it should be noted that understanding the closure
of a non-minimally dense set 
quartets is already quite challenging (for example, as opposed to the
minimally dense case, 
deciding whether or not an arbitrary set of quartets can be displayed by a 
phylogenetic tree is NP-complete \cite{S92}).
	
In many applications, biologists prefer to use weighted phylogenetic
trees and networks to model their data, 
where non-negative numbers are assigned to edges of the tree or
network to, for example, represent evolutionary 
distance. The problem of considering when a dense set of weighted quartets can be
represented by a weighted phylogenetic tree has been considered
in \cite{D03,G08}. Given the results in this paper, it could
therefore be of interest to consider how weighted level-1 networks may be inferred from
dense sets of weighted quarnets. In applications, it can also be useful to
consider rooted networks, which are essentially leaf-labelled, directed acyclic graphs.
Edges in such networks have a direction which represents  the fact that species evolve through
time from a common ancestor (represent in graph theoretical terms by a root vertex).
For such networks, the concept of level-1 networks can be defined in a similar way to the
unrooted case, and algorithms are known for deciding when minimally
dense collections of 3-leaved, level-1
rooted phylogenetic networks (which are known as {\em trinets})
can be displayed by a single
phylogenetic network \cite{HM13,H17}. It would
thus be of interest to consider inference rules for trinets. Moreover, for both the rooted
and unrooted case, it could be worth exploring whether there are inference rules for
more complicated networks (e.g. networks with level higher than one, as defined in e.g. \cite{G12}). 
Although results in  \cite{im17} indicate that such inference rules might exist, if they do,  then 
we expect that  these will probably be quite complicated.\\

\noindent {\bf Acknowledgement}
KTH, VM, and CS thank the London Mathematical Society for its support.
CS was supported by the New Zealand Marsden Fund.

\end{document}